\documentclass[conference]{IEEEtran}
\IEEEoverridecommandlockouts

\usepackage[noadjust]{cite}
\usepackage{amsmath,amssymb,amsfonts}
\usepackage{graphicx}
\usepackage{textcomp}
\usepackage{xcolor}
\def\BibTeX{{\rm B\kern-.05em{\sc i\kern-.025em b}\kern-.08em
    T\kern-.1667em\lower.7ex\hbox{E}\kern-.125emX}}

\usepackage{tikz}
\usepackage{algorithm}
\usepackage{algpseudocode}
\usepackage{amsthm}
\usepackage{stmaryrd}
\usepackage{booktabs}
\usepackage{listings}
\usepackage[table]{xcolor}

\usepackage{hyperref}
\usepackage{xspace}

\def\ufp{\mathsf{ufp}}
\def\ulp{\mathsf{ulp}}
\def\nsb{\mathsf{nsb}}
\def\max{\mathsf{max}}
\def\min{\mathsf{min}}
\def\R{\mathbb{R}}

\def\round{{\uparrow_\sim^p}}
\def\emin{{e_{min}}}

\newtheorem{theorem}{Theorem}
\theoremstyle{definition}
\newtheorem{definition}{Definition}
\newtheorem{property}{Property}
\usepackage{makecell}

\usepackage{todonotes}

\newif\ifdraft
\draftfalse

 \newcommand{\fp}{floating-point\xspace} 
 \newcommand{\FP}{Floating-point\xspace}

\begin{document}

\title{FPScan: An Automated Constraint-Based Analyzer for Floating-Point Anomaly Detection
}

\author{\IEEEauthorblockN{Julien Bortolussi, Dorra Ben Khalifa, and Pierre-Loïc Garoche}
\IEEEauthorblockA{
    \textit{Fédération ENAC ISAE-SUPAERO ONERA, Université de Toulouse,}
    Toulouse, France\\
    \{julien.bortolussi,dorra.ben-khalifa,pierre-loic.garoche\}@enac.fr
}
}
\definecolor{mygreen}{RGB}{34,139,34}    
\definecolor{myorange}{RGB}{255,140,0}   
\definecolor{myred}{RGB}{220,20,60}      

\maketitle

\begin{abstract}
Writing error-free floating-point programs is a challenging task, especially for
programmers who lack a strong background in numerical analysis and
rounding-error propagation. State-of-the-art techniques typically aim to bound
such errors using static or dynamic analysis. However, only a few tools
explicitly address critical floating-point pitfalls such as absorption and
catastrophic cancellation. These anomalies represent situations in which
rounding errors are significantly amplified, causing the semantics of the
finite-precision computation to deviate substantially from the real-number
semantics. 
In this article, we present FPScan, a novel tool to formally define and detect
both catastrophic cancellation and absorption in floating-point programs. Our
approach starts with a custom static analyzer based on abstract interpretation
to infer the order of magnitude of all program variables. This magnitude
information is then used to build a set of first-order constraints that model
error propagation and numerical precision within the program. Finally, we employ
an off-the-shelf SMT solver to determine whether the program exhibits any of
these critical numerical pitfalls. Experiments were conducted on
FPBench, a well-known benchmark suite of floating-point programs, to evaluate
the effectiveness of our tool. We also present a comparison with state-of-the-art tools regarding soundness and analysis time.
\end{abstract}

\begin{IEEEkeywords}
Floating-point arithmetic, Static analysis, Catastrophic cancellation, Absorption, Constraint generation, SMT solver.
\end{IEEEkeywords}

\section{Introduction}
\FP arithmetic is the standard approach for approximating real-number computations in modern computing systems \cite{IEEE754}. By representing numbers in a scientific-notation format with a finite significand and a bounded exponent, it supports a large dynamic range while keeping relative errors small.
However, this representation implies that \fp numbers are not uniformly distributed over the real line. As a consequence, certain operations may introduce significant numerical errors. Two well-known examples are absorption and catastrophic cancellation. Absorption occurs when adding numbers with vastly different magnitudes, causing the smaller value to be lost due to rounding. Catastrophic cancellation arises when subtracting nearly equal numbers, amplifying previously accumulated rounding errors.

In practice, many \fp expressions are implemented by developers without specialized training in numerical analysis, which makes diagnosing such numerical issues difficult. As a result, developers often rely on iterative trial-and-error modifications guided by test inputs, an approach that is both inefficient and unreliable \cite{panchekha_automatically_2015}.

\FP experts typically describe these phenomena as numerical pitfalls that explain the origins of many \fp errors \cite{goldberg_what_1991,Odyssey_2023,panchekha_automatically_2015}. However, relatively little work has investigated how these pitfalls could be systematically exploited, either to assist developers in identifying problematic computations or to automatically generate more accurate expressions.


In this article, we introduce FPScan\footnote{Software artifact available at \url{https://github.com/JBortolussi/FPScan}}, a tool for soundly and automatically detecting \fp pitfalls. 
FPScan is based on the principle that pitfalls are closely related to the relative orders of magnitude of operands and the magnitude of the associated rounding errors. 
To exploit this property, FPScan first performs a static range analysis to infer interval bounds for program variables and intermediate expressions. These bounds are then used to approximate the orders of magnitude of values involved in computations. FPScan abstracts these orders of magnitude using integer variables and generates constraints that characterize conditions under which numerical pitfalls, such as absorption and catastrophic cancellation, may occur. The resulting constraints are checked using the Z3 SMT solver to determine whether such situations are detected within the inferred ranges.
We evaluate FPScan using the FPBench benchmark suite~\cite{damouche2016toward} and compare it with FPChecker~\cite{laguna_fpchecker_2019,laguna_fpchecker_2022}, a state-of-the-art dynamic tool for detecting \fp cancellations, and with a bitblasting-based approach. Our results show that FPScan achieves high precision while remaining computationally efficient.

 The three main contributions of this article are: 
 \begin{enumerate} 
    \item An efficient constraint-based method for detecting absorption and  catastrophic cancellation in \fp computations (including denormalized numbers) based on order-of-magnitude reasoning (sections~\ref{section::background} and  \ref{section::detection}) .
    \item FPScan, a tool based on  static analysis  that combines interval range analysis with SMT solving to automatically detect \fp pitfalls (Section  \ref{section::detection}) .
    \item An experimental evaluation on the FPBench benchmark suite, including comparisons with FPChecker and a bitblasting-based approach (Section~\ref{section::evaluation}). It demonstrates the efficiency and accuracy of FPScan.
\end{enumerate}

\section{Background on Floating-Point Arithmetic}
\label{section::background}
In this section, we review the fundamentals of \fp arithmetic, with a focus on its key concepts and main sources of numerical errors.

\subsection{Floating-Point Numbers}

\FP arithmetic approximates real-number computations using a finite number of digits, making it suitable for computer systems. According to the IEEE 754 Standard~\cite{IEEE754}, a \fp number is represented as shown
\begin{equation} \small  x = (-1)^s \times b_0.b_1b_2...b_{p-1} \times \beta^e \enspace  ,  \label{def::float} \end{equation}
 where $s$ denotes the sign bit, $b_0.b_1b_2...b_{p-1}$ is the significand written in base $\beta$ (here we consider $\beta = 2$), $e$ is the exponent, and $p$ is the precision. The base $\beta$, the precision $p$, and the bounds of the exponent $e_{min}$ and $e_{max}$ together define a \fp format.

The IEEE 754 Standard defines several \fp formats that differ in precision and exponent range. The most common formats are $binary32$, $binary64$, and $binary128$, which provide $24$, $53$, and $113$ bits of precision $p$, respectively.

The \fp representation of real numbers is not unique. For instance, with $\beta = 10$, the numbers $9.0$ and $0.9 \times 10^1$ represent the same value. To address this ambiguity, \fp numbers are \textit{normalized} so that their significand is maximized, i.e., in this example, $9.0$.

According to Eq.~(\ref{def::float}), the leading digit of the significand $(b_0)$ is equal to one for all \textit{normalized} numbers. Numbers that do not satisfy this property are called \textit{denormalized}.

The process of selecting which of the two nearest \fp numbers represents a real value is called rounding. The IEEE 754 Standard \cite{IEEE754} defines several rounding modes.
In this work, we focus exclusively on the most common setting: the $\round$ rounding mode, which rounds a number to its nearest \fp neighbour. 

\subsection{Absorption and Catastrophic Cancellation in \FP Arithmetic}
Rounding errors arise in all \fp operations and may accumulate, leading to significant loss of accuracy. Certain situations, known as \fp pitfalls, can further amplify this effect. 
The specific focus of this work is to identify such pitfalls: absorption and catastrophic cancellation.
While these phenomena are widely discussed in the literature, they are rarely given formal definitions. 
Let us first introduce them informally before giving a more precise definition in 
Section~\ref{section::contribution::pitfalls}.

\subsubsection{Absorption}\label{def::absorption::txt} occurs when a
large number is added to a small number, causing the contribution of the smaller
number to be similar to a rounding error. Although only a regular rounding error is introduced
in this situation, it can lead to substantial problems, as the result differs
from what the programmer intended. 
Consider the program in
Figure~\ref{prgm::example} with precision \(p = 24\) ($binary32$), \(x = 2^{26}\), and \(y
= 1\). Line~1 computes \(z = \round(2^{26} + 1) = 2^{26}\). Although \(z\) is
affected only by a rounding error, the information provided by $y$ is lost.

\begin{figure}[h] \small
\centering
\vspace{-0.5em}
\hrule
\vspace{0.3em}
\begin{tabular}{ll}\small
1: & \textbf{Require} $x \in [1,2^{26}] \wedge y \in [1,10]$ \\
2: & $z \gets x+y$ \\
3: & $\_\_result\_\_ \gets z-x$ \\
4: & \textbf{Return} $\_\_result\_\_$ \\
\end{tabular}\hrule
\caption{\normalsize \textit{Running example} program.}
\label{prgm::example}
\vspace{0.3em}
\end{figure}

\subsubsection{Catastrophic Cancellation}
\label{def::cancellation::txt} 
 occurs when subtracting two nearby numbers whose most significant bits coincide and cancel each other out making the result significantly smaller than the operands. Consequently, any error affecting one of the operands has a proportionally larger impact on the result.
Again, consider the program in Figure~\ref{prgm::example}, with precision \(p = 24\) ($binary32$), \(x = 2^{26}\), \(y = 1\), and \(z = 2^{26}\). Line~2 computes \(\_\_result\_\_ = \round(2^{26} - 2^{26}) = 0\) instead of \(1\). In this case, the subtraction between \(z\) and \(x\) leads to catastrophic cancellation, which reveals the pitfall caused by the absorption in Line~1. This large relative error in the result can then propagate significantly through subsequent computations.

Unlike the definitions given in~\cite{LamHS13,laguna_fpchecker_2022}, in this work we rely on the formalization of Unit in the First Place (\(\ufp\)), Unit in the Last Place (\(\ulp\)), and Number of Significant Bits (\(\nsb\)). In the next section, we formally define absorption and catastrophic cancellation in terms of these quantities
at the abstract-semantics level, expressed as the satisfiability of a constraint. 

\section{Constraint-Based Detection of Floating-Point Absorption and
Catastrophic Cancellation}
\label{section::detection}
In this section, we present the approach implemented in FPScan.
The overall workflow of FPScan is illustrated in Figure~\ref{fig::workflow}. It first parses the \fp program and performs a range analysis to infer bounds on variables and intermediate expressions. These bounds are then used to generate constraints characterizing the conditions under which absorption and catastrophic cancellation may occur. The resulting constraints are checked independently using an SMT solver to determine whether such situations are feasible. 
\normalsize
\begin{figure}[tb]
        \includegraphics[width=\columnwidth]{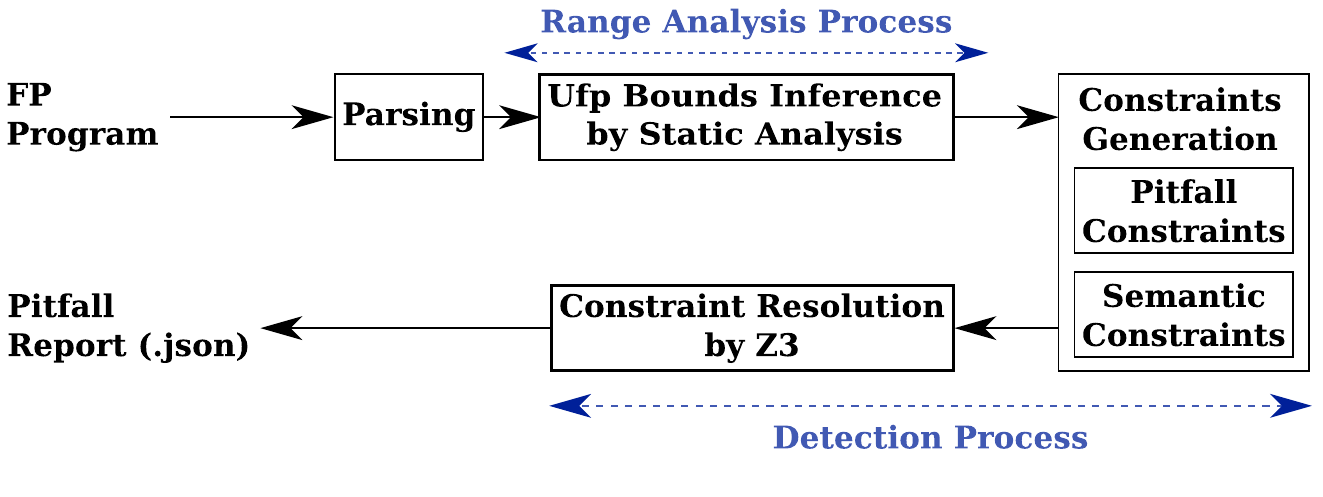}
        \vspace{-2em}
    \caption{FPScan's workflow.}
    \label{fig::workflow}
    \vspace{-1em}
\end{figure}

\FP pitfalls depend on the orders of magnitude of values and their errors. Therefore, both can be abstracted by their orders of magnitude, yielding integer constraints that are faster to solve. 

\smallskip
\noindent\textbf{Notation:}
In this section, we use the following notations.
Let $p \in \mathbb{N}$
and $e_{min}\in \mathbb{Z}$ denote the precision and the minimum exponent of the \fp format. Let $\mathbb{F}_p$ denote the set of \fp numbers in this format.
Let $\mathbb{V}$ be a set of variables and let $\Sigma = (\mathbb{V} \rightarrow \mathbb{R} \times \mathbb{F}_p)$ be the set of environments, i.e., the set of machine states mapping each variable to its real and \fp values. 
To simplify the notation, 
let $\sigma_{\mathbb{R}}$ and $\sigma_{\mathbb{F}_p}$ be the projections of $\sigma \in \Sigma$ onto $\mathbb{R}$ and $\mathbb{F}_p$, respectively. 
For all variables $v \in \mathbb{V}$, we define $\tilde{v}=\sigma_{\mathbb{F}_p}(v)$.

\subsection{Quantifying Magnitude, Precision, and Error}
\label{section::ufp}
We now introduce three quantities that characterize the order of magnitude of a \fp number, its precision, and its associated error.

\begin{definition}[Unit in the First Place ($\ufp$)]
    \begin{equation}\small
        \begin{array}{rl}
            \ufp\colon \mathbb{R}   & \longrightarrow \mathbb{Z} \\
            x                       & \longmapsto  \left\{\begin{array}{lll}
                                            \min\{i \in \mathbb{Z} \mid 2 ^{i+1} > |x|\} & \textnormal{if}\quad x\neq0\\
                                            \emin - p + 1& \textnormal{else.}
                                        \end{array}\right.
        \end{array}
        \label{def::ufp}
    \end{equation}
\end{definition}
The $\ufp$ corresponds to the order of magnitude of the number:
\begin{equation}
    \forall x \in \R, \forall n \in \mathbb{Z}, \ufp(x) \leq n \Longleftrightarrow |x| < 2^{n+1}\, .
    \label{def::ufp::bound}
\end{equation}

\begin{definition}[Unit in the Last Place ($\ulp$)]
  \begin{equation}\small
    \begin{array}{rl}
        \ulp: \mathbb{F}_p  & \longrightarrow \mathbb{Z} \\
        \tilde{x}           & \longmapsto \max(\ufp(\tilde{x}) - p + 1, e_{min} - p + 1)\, \enspace .
    \end{array}
  \end{equation}
  \label{def::ulp}\vspace{-1.5em}
\end{definition}

The $\ulp$ corresponds to the exponent of the least significant bit and is valid for all floating-point formats. This definition requires special care when handling denormalized numbers.

Let us note that the $\ulp$ is closely related to rounding errors~\cite{goldberg_what_1991}, as shown in the following equation:
\begin{equation}\small
    \forall x \in \mathbb{R},~|x - \round(x)| \leq 2^{\ulp(\round(x)) - 1} \, .
    \label{def::ulp::error}
\end{equation}

\begin{definition}[Number of Significant Bits ($\nsb$)] 
Let $x_v
\in \mathbb{V}$ a variable
and $\sigma \in \Sigma$ an environment.
\begin{equation}\small
\begin{aligned}
\quad
\nsb_\sigma : \mathbb{V} &\longrightarrow \mathbb{Z} \\
x_v &\longmapsto \ufp\!\big(\sigma_{\mathbb{F}_p}(x_v)\big)
- \ufp\!\big(\sigma_{\mathbb{R}}(x_v) - \sigma_{\mathbb{F}_p}(x_v)\big)\, .
\end{aligned}
\label{def::nsb::eq}
\end{equation}
\label{def::nsb}
\end{definition}
\vspace{-1.5em}

The $\nsb$ corresponds to the number of bits whose weight exceeds the rounding error. Consequently, it can take negative values when rounding error exceeds the magnitude of the value itself.

\subsection{Concrete Program Semantics}
\label{section::language}

The syntax of the language accepted by the FPScan parser is given in Figure~\ref{fig::program_semantic}. It supports basic arithmetic operators, mathematical functions, conditionals, and loops. In practice, FPScan takes a subset of FPCore~\cite{damouche2016toward} as input and translates it into this internal syntax.
\begin{figure}[tb]
\hrule
\small
\vspace{0.2cm}
\resizebox{\linewidth}{!}{
$
\begin{array}{rl}
\textbf{Variables:}     & (x_v, y_v, z_v, t_v) \in \mathbb{V}^4 \\[6pt]
\textbf{Operators:}     & \lozenge \in \{+, -, \times, \div\} \\[6pt]
\textbf{Functions:}     & \mathsf{f} \in \{cos, sin, atan, sqrt\} \\[6pt]
\textbf{Comparisons:}   & \rhd \in \{>, \geq, =, \neq\}\\[6pt]
\textbf{Statements:}    & \begin{array}[t]{lcl}
                            s & :=  & s;s \\
                              & |   & x_v = n, ~ n \in \mathbb{R} \\
                              & |   & x_v = y_v \\
                              & |   & x_v = y_v \lozenge z_v    \\
                              & |   & x_v = f(y_v)              \\
                              & |   & \mathbf{if} ~ t_v \rhd 0 ~ \mathbf{then} ~ s_{then} ~ \mathbf{else} ~ s_{else} ~ \mathbf{endif}   \\
                              & |   & \mathbf{while} ~ t_v \rhd 0 ~ \mathbf{do} ~ s ~ \mathbf{done }   \\
                              & |   & \mathbf{noop}
                        \end{array}
\end{array}
$
}
\vspace{0.2cm}
\hrule
\caption{FPScan language.}
\label{fig::program_semantic}
\vspace{-1em}
\end{figure}
As highlighted by Titolo~et~al.~\cite{TitoloMFMM24}, most static analysis tools
support only statically bounded loops. FPScan is no exception and therefore only
supports bounded loops, with the maximum number of iterations specified by the
user. During a pre-processing stage, these loops are unrolled and replaced with
a sequence of nested conditional statements that execute the loop body as long
as the loop condition holds. The user-defined bound determines the depth of this
conditional chain. Additionally, we assume that both real and floating‑point
executions follow the same control‑flow path in conditional statements, i.e.,
test instabilities are not modeled. Handling unbounded loops and
unstable tests is left for future work.

The semantic function $\mathbb{S} \llbracket s \rrbracket : \Sigma \rightarrow \Sigma$ defines the semantics of a statement $s$ by mapping a machine state to another machine state according to the execution of $s$. Its definition is given in Figure~\ref{fig::concrete_semantic}.
Here, $\sigma \in \Sigma$ denotes a machine state; $s_1$ and $s_2$ denote statements; $x_v$, $y_v$, $z_v$, and $t_v$ denote variables; $n \in \mathbb{R}$ denotes a real-valued constant; $\rhd \in \{>, \geq, =, \neq\}$ denotes a comparison operator; and $\mathsf{f} \in \{cos, sin, atan, sqrt\}$ denotes a supported function.

In Eq.~\eqref{def::concrete_C}, we define $\mathcal{C} \llbracket s \rrbracket : \mathcal{P}(\Sigma) \rightarrow \mathcal{P}(\Sigma)$, the collecting semantic function that returns the set of reachable machine states resulting from executing a statement over a set of input states. Here, $\mathcal{P}(\Sigma)$ denotes the powerset of $\Sigma$.

\begin{figure} [tb]
\hrule \vspace{0.2cm} \resizebox{\linewidth}{!}{ \setlength{\arrayrulewidth}{0.4pt} \small  \begin{tabular}{l} $\mathbb{S} \llbracket s_1;~s_2 \rrbracket (\sigma) ~:=~\mathbb{S} \llbracket s_2 \rrbracket \big(\mathbb{S} \llbracket s_1 \rrbracket (\sigma)\big)$\\[6pt] $\mathbb{S} \llbracket x_v = n \rrbracket (\sigma) ~:=~\sigma \big[ x_v \leftarrow (n, \round (n)) \big]$ \\[6pt] $\mathbb{S} \llbracket z_v = x_v \lozenge y_v \rrbracket (\sigma) ~:=~\sigma \left[ z_v \leftarrow \left(\begin{array}{l} \sigma_{\mathbb{R}}(x_v) \lozenge \sigma_{\mathbb{R}}(y_v), \\ \round (\sigma_{\mathbb{F}_p}(x_v) \lozenge \sigma_{\mathbb{F}_p}(y_v)) \end{array}\right)\right]$\\[12pt] $\mathbb{S} \llbracket \mathbf{if}~t_v \rhd 0~\mathbf{then}~s_{then}~\mathbf{else}~s_{else}~\mathbf{endif} \rrbracket(\sigma)$ \\ \quad $:= \mathsf{if}~\sigma_\mathbb{R}(t_v) \rhd 0~ \mathsf{then}~\mathbb{S}\llbracket s_{then} \rrbracket~ \mathsf{else}~\mathbb{S}\llbracket s_{else} \rrbracket $ \\[12pt] $\mathbb{S} \llbracket x_v = f(y_v) \rrbracket (\sigma) ~:=~\sigma\left[ x_v \leftarrow ( f(\sigma_{\mathbb{R}}(y_v)), \round (\mathsf{f} (\sigma_{\mathbb{F}_p}(y_v))) )\right]$ \\[6pt] $\mathbb{S} \llbracket x_v = y_v \rrbracket (\sigma) ~:=~\sigma \left[ x_v \leftarrow \sigma(y_v) \right]$\\[6pt] $\mathbb{S} \llbracket \mathbf{noop} \rrbracket (\sigma)~:=~\sigma$ \end{tabular} } \vspace{0.2cm} \hrule \caption{Concrete semantics of the language.} \label{fig::concrete_semantic} 
\vspace{-1.5em}
\end{figure}

\begin{equation}\small
    \forall X \in \mathcal{P}(\Sigma),~\mathcal{C}\llbracket s \rrbracket (X) = \{\mathbb{S} \llbracket s \rrbracket (\sigma) ~|~ \sigma \in X\}.
    \label{def::concrete_C}
\end{equation}


\subsection{Approximated Constraint-based Program Semantics}
FPScan abstracts the semantic function $\mathcal{C}$, defined in Eq.~\eqref{def::concrete_C}, using constraints over $\ufp$, $\ulp$, and $\nsb$. Let $\mathbb{L}$ denote the set of the constraints. A constraint \(l : \Sigma \rightarrow \{\text{true}, \text{false}\}\) 
is a boolean-valued function
over a machine state.
To formalize this abstraction, we introduce a concretization-based abstract interpretation~\cite{cousot1977abstract}. The concretization function \(\gamma\), defined in Eq.~\eqref{def::cstr::gamma}, maps each constraint to the set of machine states that satisfy it.
\begin{equation}\small
    \begin{array}{rl}
        \gamma\colon \mathbb{L} & \longrightarrow   \mathcal{P}(\Sigma) \\
        l                       & \longmapsto       \{\sigma \in \Sigma | l(\sigma) = true\}
    \end{array}
    \label{def::cstr::gamma}
\end{equation}
The function $\mathcal{C}^{\#}\llbracket s \rrbracket : \mathbb{L} \rightarrow
\mathbb{L}$ represents the abstract program semantics. To be considered correct,
it must satisfy Eq.~\eqref{def::abstract}. In other words, it must not omit
any reachable machine states. 
\begin{equation}\small
   \forall l \in \mathbb{L},~\mathcal{C} \llbracket s \rrbracket (\gamma (l)) \subseteq \gamma (\mathcal{C}^{\#} \llbracket s \rrbracket (l))
    \label{def::abstract}
\end{equation}
\normalsize

\subsubsection{Range analysis}
FPScan first performs an interval analysis to detect and reject programs that may produce overflows, NaNs, or divisions by zero. The inferred intervals, of the form $[\underline{x}, \overline{x}]$, provide bounds on program variables and are then 
used to derive bounds on their $\ufp$ values using Eq.~\eqref{def::ufp::bound}. To this end, we introduce the following three bounding constraints:
\[\small
\left\{
\begin{aligned}\small 
\mathsf{UpperBound}\llbracket x_v \rrbracket
&= \lambda \sigma.\;
\ufp(\sigma_{\mathbb{F}_p}(x_v))
\leq
\ufp\!\big(\max(|\underline{x}|,|\overline{x}|)\big),
\\[0.3em]
\mathsf{LowerBound}\llbracket x_v \rrbracket
&= \lambda \sigma.\;
\ufp\!\big(\min(|\underline{x}|,|\overline{x}|)\big)
\leq
\ufp(\sigma_{\mathbb{F}_p}(x_v)),
\\[0.3em]
\mathsf{Bound}\llbracket x_v \rrbracket
&= \lambda \sigma.\;
\left\{
\begin{aligned}
&\mathsf{UpperBound}\llbracket x_v \rrbracket(\sigma)\\
&\wedge\;
\mathsf{LowerBound}\llbracket x_v \rrbracket(\sigma)
\  \text{if } 0 \notin [\underline{x},\overline{x}] .
\end{aligned}
\right.
\end{aligned}
\right.
\]

\subsubsection{Constraint generation}

The constraints used to abstract program executions are derived from the concrete semantics of the language. 
For the remainder of this section, let $x_v$, $y_v$, $z_v$ and $t_v$ be program variables in $\mathbb{V}$.
Let $s_1$, $s_2$, $s_{then}$, and $s_{else}$ be statements in the language.
We now define the constraints associated with each statement of the language.
Since our constraints are expressed solely in terms of $\ufp$, $\ulp$, and $\nsb$, they do not capture the sign of the operands. As a consequence, addition and subtraction are treated uniformly at the constraint level.

\begin{definition}[Addition/Subtraction]
For all $l \in \mathbb{L}$, the abstraction of the statement  $z_v = x_v \pm y_v$ is defined as 
    \[\small
        \resizebox{0.95\linewidth}{!}{$
        \mathcal{C}^\# \llbracket z_v = x_v \pm y_v \rrbracket (l) :=
            \lambda \sigma. \left[\begin{array}{l}
                l(\sigma) \wedge \mathsf{Bound} \llbracket z_v \rrbracket (\sigma) \\
                \wedge\ \nsb_\sigma(z_v) \geq \ufp(\tilde{z}) - \ufp_e
            \end{array}\right.
        $}
     \]
    with
    \vspace{-1em}
    \begin{equation*}\small
        \ufp_e = \max\left(\begin{array}{l}
                \max\left(\begin{array}{l}
                    \ufp(\tilde{x}) - \nsb_\sigma(x_v),\;\\
                    \ufp(\tilde{y}) - \nsb_\sigma(y_v)
                \end{array}\right) + 2,\\[6pt]
                \ulp(\tilde{z})
            \end{array} \right) \enspace .
    \end{equation*}
    \label{cstr::add::theorem}
\end{definition}

\begin{definition}[Multiplication]\label{cstr::mult::theorem}
$\forall$ $l \in \mathbb{L}$,
the abstraction of the statement $z_v = x_v \times y_v$ is defined as
\[
\small
\mathcal{C}^\# \llbracket z_v = x_v \times y_v \rrbracket (l)
:= \lambda \sigma.
\left[
\begin{aligned}
&l(\sigma)
\wedge \mathsf{Bound} \llbracket z_v \rrbracket (\sigma) \\
&\wedge\;
\nsb_\sigma(z_v) \geq \ufp(\tilde{z}) - \ufp_e
\end{aligned}
\right.
\]
with
 \vspace{-1em}
\begin{equation*}\small
\resizebox{1\columnwidth}{!}{$
\ufp_e =
\max\left(
\begin{array}{l}
\ufp(\tilde{x}) + \ufp(\tilde{y})
- \min\big(\nsb_\sigma(x_v), \nsb_\sigma(y_v)\big) + 3,\\[4pt]
\ufp(\tilde{x}) + \ufp(\tilde{y})
- \nsb_\sigma(x_v) - \nsb_\sigma(y_v) + 3,\\[4pt]
\ulp(\tilde{z})
\end{array}
\right)\enspace.
$}
\end{equation*}
\end{definition}

\begin{definition}[Division]\label{cstr::div::theorem}
     $\forall l \in \mathbb{L}$, the abstraction of the
    statement $z_v = x_v \div y_v$ is defined as 
    \[
    \small
        \resizebox{0.95\linewidth}{!}{$
        \mathcal{C}^\# \llbracket z_v = x_v \div y_v \rrbracket (l) :=
            \lambda\sigma. \left[\begin{array}{l}
                l(\sigma) \wedge \mathsf{Bound} \llbracket z_v \rrbracket (\sigma) \\
                ~ \wedge \left(
                    \begin{array}{c}
                        \nsb_\sigma(z_v) \geq \ufp(\tilde{z}) - \ufp_e  \\
                        \vee    \\
                        \nsb_\sigma(y_v) \leq 1
                    \end{array}
                \right)
            \end{array}\right.
        $}
    \]
\vspace{-1em}
\noindent
    with\vspace{0.2em}
    \begin{equation*}\small
        \resizebox{1\linewidth}{!}{$
        \ufp_e = \max\left(
        \begin{array}{l}
            \ufp(\tilde{x}) - \ufp(\tilde{y}) - \min(\nsb_\sigma(x_v), \nsb_\sigma(y_v)) + 4 \\
            \ulp(\tilde{z}) - 1
        \end{array}
        \right) \enspace .
        $}
    \end{equation*}
\end{definition}

\begin{definition}[Cos, Sin, and Atan]\label{cstr::trigo::theorem}
  $\forall l \in \mathbb{L}$,  the abstraction of the
    statement $z_v = f(x_v)$ with $f \in \{\text{cos}, \text{sin}, $ $\text{atan}\}$ is defined as 
    \vspace{-.5em}
    \[
    \small
        \resizebox{0.95\linewidth}{!}{$
        \mathcal{C}^\# \llbracket z_v = f(x_v) \rrbracket (l) := \lambda\sigma.
        \left[\begin{array}{l}
            l(\sigma) \wedge \mathsf{Bound} \llbracket z_v \rrbracket (\sigma) \\
            \wedge ~ \nsb_\sigma(z_v) \geq \ufp(\tilde{z}) - \ufp_e\\
        \end{array}\right.
        $}
        \label{cstr::trigo::theorem::Eq}
    \vspace{-1em}
    \]
    \vspace{0.2em}
    \noindent
    with
    \begin{equation*}\small
        \ufp_e =    \max\left(\begin{array}{l}
                        \ufp(\tilde{x}) - \nsb_\sigma(x_v) + 1   \\
                        \ulp(\tilde{z}) - 1
                    \end{array}\right) \enspace .
    \end{equation*}
\end{definition}

\begin{definition}[Sqrt]\label{cstr::sqrt::theorem}
    $\forall l \in \mathbb{L}$, the abstraction of the
    statement $z_v = \sqrt{x_v}$  is defined as
    \[
    \small 
        \mathcal{C}^\# \llbracket z_v = \sqrt{x_v   } \rrbracket (l) := \lambda \sigma . \left[
            \begin{array}{l}
                l(\sigma) \wedge \mathsf{Bound} \llbracket z_v \rrbracket (\sigma) \\
                \wedge ~ \nsb_\sigma(z_v) \geq \ufp(\tilde{z}) - \ufp_e
            \end{array}
        \right.
        \label{cstr::sqrt::theorem::Eq}
    \]
    with
    \begin{equation*} \small
        \ufp_e = \max (\frac{\ufp(\tilde{x}) - \nsb_\sigma(x_v) + 1}{2}, \ulp(\tilde{z}) - 1) \enspace .
    \end{equation*}
\end{definition}

\begin{definition}[Assign Constant]\label{cstr::asnc::theorem}
    Let $n \in \mathbb{R}$ be a constant, $\forall l \in \mathbb{L}$, the abstraction of the statement $x_v := n$ is defined as
    \[
    \small
        \mathcal{C}^\#\llbracket x_v := n \rrbracket (l) =
            \lambda\sigma. \left[\begin{array}{l}
                l(\sigma) \\
                \wedge ~ \mathsf{Bound}\llbracket x_v \rrbracket (\sigma) \\
                \wedge ~ \nsb(x_v) \geq p
            \end{array}\right. \enspace .
    \]
\end{definition}

\begin{definition}[Assign Variable]\label{cstr::asnv::theorem}
$\forall l \in \mathbb{L}$, the abstraction of the statement $x_v := y_v$ is defined as
    \[\small
        \mathcal{C}^\#\llbracket x_v := y_v \rrbracket (l) =
            \lambda\sigma. l(\sigma) \wedge x_v = y_v \enspace .
    \]
\end{definition}

\begin{definition}[Sequence]\label{cstr::seq::theorem}
    Let $s_1$ and $s_2$ denote two statements. The abstract semantics of the sequence is defined as 
    \[\small
        \forall l \in \mathbb{L},~\mathcal{C}^\#\llbracket s_1;~s_2 \rrbracket (l) := \mathcal{C}^\# \llbracket s_2 \rrbracket (\mathcal{C}^\# \llbracket s_1 \rrbracket (l)) \enspace .
        \label{cstr::sequence}
    \]
\end{definition}

\begin{definition}[If Then Else]\label{cstr::ite::theorem}
    Let $\rhd$ be a
    comparison operator, then the abstract branching condition is defined as 
    \[   
            \begin{array}{r}
                \forall l \in \mathbb{L}, \mathcal{C}^\# \llbracket \mathbf{if}~t_v \rhd 0~\mathbf{then}~s_{then}\mathbf{else}~s_{else}~\mathbf{endif} \rrbracket (l) := \\
                \hfill \qquad \mathcal{C}^\#\llbracket s_{then}\rrbracket(l)(\sigma) \vee \mathcal{C}^\#\llbracket s_{else}\rrbracket(l)(\sigma)
            \end{array}
    \]
\end{definition}

\begin{theorem}[Soundness of abstract semantics $\mathcal{C}^\#$]
The abstract semantics $\mathcal{C}^\#$, 
is a sound over-approximation of the concrete semantics $\mathcal{C}$, \textit{i.e.} satisfies Eq.~\eqref{def::abstract}.
\end{theorem}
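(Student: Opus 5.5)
We argue by structural induction on the statement $s$, showing for every constructor that $\mathcal{C}\llbracket s\rrbracket(\gamma(l)) \subseteq \gamma(\mathcal{C}^\#\llbracket s\rrbracket(l))$ for every $l \in \mathbb{L}$. One point must be fixed first. Each abstract constraint is evaluated on the \emph{post}-state, and the conjunct $l(\sigma)$ is read on the updated environment. We therefore interpret $l$ as a constraint on the variables other than the overwritten one. Equivalently, we assume SSA form, which the unrolling pre-processing produces, so that $l(\sigma)=l(\sigma')$ whenever $\sigma$ and $\sigma'$ agree outside the assigned variable. We also use that the range analysis is sound, so every reachable $\sigma_{\mathbb{F}_p}(z_v)$ lies in $[\underline{z},\overline{z}]$. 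Monotonicity of $\ufp$ in $|x|$ together with Eq.~\eqref{def::ufp::bound} then gives $\mathsf{Bound}\llbracket z_v\rrbracket(\sigma')$ for every post-state $\sigma'$. The lower-bound conjunct needs $0\notin[\underline{z},\overline{z}]$, which is exactly the side condition in its definition.

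\textbf{Structural cases.} These are routine. For the sequence, Definition~\ref{cstr::seq::theorem} and the induction hypothesis give the inclusion
\[
\mathcal{C}\llbracket s_2\rrbracket(\mathcal{C}\llbracket s_1\rrbracket(\gamma(l)))\subseteq \mathcal{C}\llbracket s_2\rrbracket(\gamma(\mathcal{C}^\#\llbracket s_1\rrbracket(l)))\subseteq\gamma(\mathcal{C}^\#\llbracket s_2\rrbracket(\mathcal{C}^\#\llbracket s_1\rrbracket(l))),
\]
where the first step uses monotonicity of $\mathcal{C}$ on sets. For the conditional, each concrete state follows exactly one branch. Hence $\mathcal{C}\llbracket\mathbf{if}\ldots\rrbracket(\gamma(l))\subseteq \mathcal{C}\llbracket s_{then}\rrbracket(\gamma(l))\cup\mathcal{C}\llbracket s_{else}\rrbracket(\gamma(l))$, and $\gamma$ maps the disjunction to the union. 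The variable assignment copies both components of the environment, so the equality constraint holds. Constant assignment is handled directly: for $\tilde x=\round(n)$, Eq.~\eqref{def::ulp::error} gives $|n-\tilde x|\le 2^{\ulp(\tilde x)-1}$. Then $\ufp(n-\tilde x)\le \ulp(\tilde x)-1$, so $\nsb\ge p$ when $\tilde x$ is normal. The subnormal and zero cases use the convention $\ufp(0)=\emin-p+1$ and need a separate check.

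\textbf{Arithmetic cases.} Here lies the real content. Each case reduces to proving a bound on the error of the result,
\[
\ufp(\sigma_{\mathbb{R}}(z_v)-\tilde z)\le \ufp_e,
\]
since by Definition~\ref{def::nsb} this is equivalent to $\nsb_\sigma(z_v)\ge\ufp(\tilde z)-\ufp_e$. Write $e_x=\sigma_{\mathbb{R}}(x_v)-\tilde x$, so that $|e_x|<2^{\ufp(\tilde x)-\nsb(x_v)+1}$, and similarly for $y$. The total error then splits into a propagated part and a new rounding part: $|x\lozenge y-\round(\tilde x\lozenge\tilde y)|\le|x\lozenge y-\tilde x\lozenge\tilde y|+2^{\ulp(\tilde z)-1}$.
\begin{itemize}
\item For $\pm$, the propagated part is $|e_x|+|e_y|<2^{\max(\cdot)+2}$.
\item For $\times$, we expand $xy-\tilde x\tilde y=\tilde x e_y+\tilde y e_x+e_xe_y$ and bound the three terms using $|\tilde x|<2^{\ufp(\tilde x)+1}$. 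This produces the two entries of the maximum in Definition~\ref{cstr::mult::theorem}.
\item For the functions, $\sin,\cos,\operatorname{atan}$ are $1$-Lipschitz, and $\sqrt{\cdot}$ satisfies $|\sqrt a-\sqrt b|\le\sqrt{|a-b|}$. This explains the halved exponent in the sqrt case.
\end{itemize}
Finally, $a+b\le 2\max(a,b)$ and the strictness of the bound in Eq.~\eqref{def::ufp::bound} together absorb the sum into $\max(\cdot)+\text{const}$.

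\textbf{Expected obstacle.} The main obstacle is division. There the propagated error is
\[
\frac{x}{y}-\frac{\tilde x}{\tilde y}=\frac{e_x\tilde y-\tilde x e_y}{y\,\tilde y},
\]
and we need a lower bound on $|y|=|\tilde y+e_y|$. That bound holds only when $e_y$ is small relative to $\tilde y$, say $\nsb(y_v)\ge2$, which yields $|y|\ge|\tilde y|/2$. This is exactly why the disjunct $\nsb_\sigma(y_v)\le1$ appears in Definition~\ref{cstr::div::theorem}. We will split on it: if $\nsb_\sigma(y_v)\le1$ the constraint holds trivially. Otherwise we combine $|y|\ge 2^{\ufp(\tilde y)-1}$ with $|\tilde y|\ge2^{\ufp(\tilde y)}$ to obtain the exponent $\ufp(\tilde x)-\ufp(\tilde y)-\min(\nsb)+4$. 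We expect similar care with constants in the sqrt case, and with subnormal results, where the $\ulp$ floor from Definition~\ref{def::ulp} dominates. In all these cases the exact constants must be checked rather than assumed.
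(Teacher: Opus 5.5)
Your route is the paper's: structural induction, and $\mathsf{Bound}$ justified by soundness of the interval analysis. Each arithmetic case is reduced to $\ufp(\varepsilon_z)\le\ufp_e$ by splitting $\varepsilon_z$ into propagated error plus one rounding error bounded via Eq.~\eqref{def::ulp::error}. You use the same expansions for $\times$ and $\div$, the same case split on $\nsb_\sigma(y_v)\le 1$ with $|y|>2^{\ufp(\tilde y)-1}$, the $1$-Lipschitz and $\sqrt{|\varepsilon_x|}$ bounds, and the same arguments for sequencing, branching and copies. Your frame remark is a genuine point the paper skips: its rules evaluate $l$ on the post-state, which is unsound if $l$ mentions the overwritten variable. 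However, the paper never says the unrolling produces SSA form (unrolling by itself repeats assignments to the same variables). So SSA is a hypothesis you are adding, not one you can cite.

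The step you leave open cannot be closed, and your ``normal'' constant case is also off by one at the bottom binade. For $x_v:=0$ the post-state has $\nsb=\ufp(0)-\ufp(0)=0<p$. An exactly representable constant with $\ufp(\tilde x)=\emin$ gives $\nsb=\emin-(\emin-p+1)=p-1$. The culprit is that Eq.~\eqref{def::ufp::bound} fails at $x=0$ in the direction $|x|<2^{n+1}\Rightarrow\ufp(x)\le n$ whenever $n<\emin-p+1$.

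Your inference ``$|n-\tilde x|\le 2^{\ulp(\tilde x)-1}$, hence $\ufp(n-\tilde x)\le\ulp(\tilde x)-1$'' uses exactly that direction. So does every arithmetic case when it passes from $|\varepsilon_z|<2^{\ufp_e+1}$ to $\ufp(\varepsilon_z)\le\ufp_e$. This is harmless for $\pm$ and $\times$, where $\ufp_e\ge\ulp(\tilde z)\ge\emin-p+1$. It is not harmless for $\div$ and the $\cos/\sin/\mathrm{atan}$ rule, where $\ufp_e$ can equal $\ulp(\tilde z)-1=\emin-p$.

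Concretely, in binary32 take $x_v\mapsto(2^{-125},2^{-125})$ and $y_v\mapsto(32,32)$. Both satisfy the constant-assignment constraints, e.g.\ $\nsb_\sigma(x_v)=24=p$. Then $z_v=x_v\div y_v$ gives the exact subnormal $\tilde z=2^{-130}$ with $\nsb_\sigma(z_v)=19$. Definition~\ref{cstr::div::theorem} gives $\ufp_e=\max(-150,-150)=-150$ and demands $\nsb_\sigma(z_v)\ge 20$. The escape disjunct also fails, since $\nsb_\sigma(y_v)=154$.

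The paper's proof has the same hole: its appendix asserts that Eq.~\eqref{def::ufp::bound} is valid at zero. This is a defect of the stated rules rather than of your strategy, and no amount of careful bookkeeping inside the proof will rescue it. To finish the proof you must weaken the rules. For example, replace $\nsb\ge p$ in Definition~\ref{cstr::asnc::theorem} by $\nsb_\sigma(x_v)\ge\ufp(\tilde x)-\ulp(\tilde x)$. Also add $\emin-p+1$ as an extra argument of the $\max$ defining $\ufp_e$ in Definitions~\ref{cstr::div::theorem} and~\ref{cstr::trigo::theorem}.
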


\begin{proof}
Soundness proof is performed by structural induction on the program constructs.
\emph{Abstract math. functions soundness:} 
Defs~\ref{cstr::add::theorem}--
\ref{cstr::sqrt::theorem} constraints 
are all made of three parts. The first propagates previous constraints. The second bounds the ufp of variables using $\mathsf{Bound} \llbracket . \rrbracket$. It is sound because the interval analysis producing the bound is. The third bounds the $\nsb$. Let us develop the derivation of the bound on the $\nsb$ for the (Addition/Subtraction) constraint.
The soundness proof of Defs~\ref{cstr::mult::theorem}--
\ref{cstr::sqrt::theorem} are provided in Appendix. 
\emph{Abstract addition soundness:}
Let $\varepsilon_x$, $\varepsilon_y$, and $\varepsilon_z$ denote the \fp errors: for $v \in \{x,y,z\}$, 
$\varepsilon_v = v - \tilde{v} \Leftrightarrow \tilde{v} = v - \varepsilon_v \Leftrightarrow v = \tilde{v} + \varepsilon_v$.
Let \( e_+ \) be the
rounding error introduced by the addition.
We can derive 
$\tilde{z} = \round(\tilde{x} + \tilde{y}) = \tilde{x} + \tilde{y} + e_+ = (x - \varepsilon_x) + (y - \varepsilon_y) + e_+
           = 
           {x + y}
           - (
           {\varepsilon_x + \varepsilon_y - e_+}
           )$, characterizing  \( \varepsilon_z \) as \(\varepsilon_x + \varepsilon_y - e_+\).
 The error $|\varepsilon_z|$ is bounded using the triangle inequality    
 $|\varepsilon_z| \leq |\varepsilon_x| + |\varepsilon_y| + |e_+|$ while
$|\varepsilon_x|$ and $|\varepsilon_y|$ are bounded by 
Eq.~\eqref{def::ufp::bound} and $e_+$ by 
$|e_+| \leq 2^{\ulp(\tilde{z}) - 1}$ using Eq.~\eqref{def::ulp::error}.
We then have 
$    |\varepsilon_z| < 2^{\ufp(\varepsilon_x) + 1} + 2^{\ufp(\varepsilon_y) + 1} + 2^{\ulp(\tilde{z}) - 1} < 
2^{\max(\ufp(\varepsilon_x) + 1, \ufp(\varepsilon_y) + 1, \ulp(\tilde{z})-1) + 2}$.
    Using Eq.~\eqref{def::ufp::bound} we obtain
$\ufp(\varepsilon_z) \leq \max(\ufp(\varepsilon_x) + 1, \ufp(\varepsilon_y) + 1, \ulp(\tilde{z})-1) + 1$.
\emph{Abstract control flow soundness:} Let us now focus on the statements of the language described in Defs~\ref{cstr::asnc::theorem}--
\ref{cstr::ite::theorem}. 
Def.~\ref{cstr::asnc::theorem} is sound because to round a real number to make it fit the \fp format adds no more than a rounding error, hence $\nsb \geq p$.
Def.~\ref{cstr::asnv::theorem} states that no additional error is introduced copying a variable and Def.~\ref{cstr::seq::theorem} translates constraint propagation.
Def.~\ref{cstr::ite::theorem} states the control flow may take either path of a conditional statement without any consideration for the condition. This is an overapproximation because any possible execution of the program necessarily follows one of the branches. Therefore, this abstraction is sound.
\end{proof}

\subsection{A Taxonomy of Floating-Point Pitfalls}
\label{section::contribution::pitfalls}

In this section, we formally define absorption and catastrophic cancellation as constraints.
These numerical issues arise specifically in the context of addition and subtraction operations. We recall that addition and subtraction are treated in the same way in our analysis.
An addition or subtraction operation may exhibit two types of absorption: either the left-hand-side operand or the right-hand-side operand may be absorbed.

\begin{definition}[Absorption]
$\forall x_v, y_v \in \mathbb{V}$, \( x_v \) is \emph{absorbed} by \( y_v \) in the computation \( x_v \pm y_v \) if and only if
\[
\small
\ufp(\tilde{x}) \leq \ulp(\tilde{y}) - 1 \enspace .
\]
\label{def::absorption}
\end{definition}
\vspace{-1em}

Def.~\ref{def::absorption} defines the condition under which \( x_v \) is absorbed by \( y_v \).
Intuitively, this occurs when the magnitude of \( x \) (measured by its \( \ufp \)) is smaller than the resolution of \( y \), i.e., the magnitude of a rounding unit at \( y \) given by \( \ulp(\tilde{y}) \).
Swapping the roles of \( x_v \) and \( y_v \) yields the symmetric condition corresponding to the absorption of \( y_v \) by \( x_v \).
We now formalize our definition of catastrophic cancellation.

\begin{definition}[Catastrophic Cancellation]
Let $x_v, y_v, z_v \in \mathbb{V}$ be three variables such that \( z_v = x_v \pm y_v \).
\( z_v \) is affected by a catastrophic cancellation if and only if
\[\small
\ufp(\tilde{z}) \leq \ufp(\tilde{x}) - \nsb_\sigma(x_v) \;\vee\; \ufp(\tilde{z}) \leq \ufp(\tilde{y}) - \nsb_\sigma(y_v) \enspace .
\]
\label{def::cancellation}
\end{definition}
\vspace{-1em}

Def.~\ref{def::cancellation} defines the conditions under which a catastrophic cancellation occurs during the addition or subtraction of $x_v$ and $y_v$. Intuitively, this occurs when the magnitude of the result (measured by its $\ufp$) is smaller than the magnitude of the error in either of the operands (measured by its $\nsb$).

\begin{property}[Detection of same sign arguments]
If the errors are sufficiently small, no spurious catastrophic cancellation is detected for same sign additions, or different sign subtraction.
\end{property}

\begin{proof}
Let us consider $z_v = x_v + y_v$ with $\tilde{x} \geq 0$ and $\tilde{y} \geq 0$.
By symmetry, let us focus on $x_v$ and assume $\nsb_\sigma(x_v) > 0$. Since $\tilde{y} \geq 0$, we have $\tilde{x} \leq \tilde{x} + \tilde{y}$. By monotonicity of rounding, $\tilde{x} = \round(\tilde{x}) \leq \round(\tilde{x} + \tilde{y}) = \tilde{z}$. Moreover, $\tilde{x} > |\varepsilon_x|$ and therefore $\tilde{z} > |\varepsilon_x|$. Using Eq.~\eqref{def::ufp::bound} and the definition of $\nsb$, $2^{\ufp(\tilde{z}) + 1} > 2^{\ufp(\tilde{x}) - \nsb_\sigma(x_v)}$ and thus $\ufp(\tilde{z}) + 1 > \ufp(\tilde{x}) - \nsb_\sigma(x_v)$ which means that the formula used to characterize catastrophic cancellation is false.
\end{proof}


\subsection{Detection of Floating-Point Pitfalls Using an SMT solver}

FPScan constructs a set of constraints to be checked using an SMT solver. Each constraint abstracts the set of reachable machine states exhibiting a specific numerical pitfall. One constraint is generated for each pitfall that may occur in the program.

Each constraint consists of two components. The first characterizes the set of reachable machine states and is derived using $\mathcal{C}^\#$. The second identifies machine states exhibiting a specific pitfall, as defined in Defs.~\ref{def::absorption}~and~\ref{def::cancellation}.

Again, we consider the program illustrated in Figure~\ref{prgm::example}. Line~1 may be affected by three numerical pitfalls: absorption of $x$ by $y$, absorption of $y$ by $x$, and catastrophic cancellation. Each case is analyzed independently.
We first consider the absorption of $x$ by $y$. FPScan constructs a constraint characterizing the machine states reachable after Line~1 and satisfying the absorption condition. This constraint is obtained by combining Defs.~\ref{cstr::add::theorem} and~\ref{def::absorption}. The resulting formula is submitted to the Z3~\cite{Z3} SMT solver. Since the constraint is unsatisfiable, this case is proven impossible.
The same procedure is applied to the absorption of $y$ by $x$. In this case, the constraint is satisfiable, and FPScan reports a warning indicating that the pitfall may occur.
The same analysis is repeated for catastrophic cancellation. Line~2 is treated in the same way. Note that the set of reachable machine states after Line~2 is constrained by both Line~1 and Line~2.
The report produced by FPScan is shown in Figure~\ref{fig::example::result}. It was generated in less than $0.15$ seconds. It contains two warnings on Line~2 and one warning on Line~1, which implies that the remaining three potential pitfalls are proven impossible. In particular, catastrophic cancellation cannot occur on Line~1.

The report highlights the root causes of numerical inaccuracies in Figure~\ref{fig::example::result}. For large values of $x$, $y$ is absorbed on Line~1, introducing a small error and, more importantly, a semantic gap that is later revealed by cancellation on Line~2.
However, the absorption reported on Line~2 is spurious, as it would imply that $z$ becomes larger than $x$, which is impossible. This issue arises from the non-relational nature of the constraints used to bound $\ufp$.
This is expected due to the sound yet incomplete nature of our analysis. We will see in the experimental evaluation that the conservativeness of the method is limited, giving very accurate results.
{
\begin{figure}[tb]
\hrule
\scriptsize
\begin{lstlisting}[language=Python, basicstyle=\small\ttfamily]
File "..." => z = x + y:
	ABSORPTION: y absorbed by x
File "..." => __result__ = z - x:
	CANCELLATION
	ABSORPTION: z absorbed by x
\end{lstlisting}
\vspace{0.2em}
\hrule
\caption{FPScan Output for Figure~\ref{prgm::example}.}
\label{fig::example::result}
\end{figure}

}

\section{Experimental Evaluation}
\label{section::evaluation}
We now investigate the following research questions by evaluating FPScan on the
FPBench benchmark suite~\cite{damouche2016toward}:

\begin{itemize}
    \item \textbf{RQ1}: To what extent can floating point pitfalls be detected using an abstraction based on $\ufp$, $\ulp$, and $\nsb$?
    \item \textbf{RQ2}: How does the runtime performance of our constraint solving approach compare to that of existing state-of-the-art tools?
\end{itemize}

\begin{table}[bt]
\caption{Comparison of FPScan with the  {bitblasting} based ground truth.}

\vspace{-5pt}

\label{tab::results}
\scriptsize
\setlength{\tabcolsep}{2pt}
\renewcommand{\arraystretch}{1.0}

\resizebox{\columnwidth}{!}{%
\begin{tabular}{l|c|cccc|c|cccc}
\textbf{Program} 
& \multicolumn{5}{|c}{\textbf{Absorption}} 
& \multicolumn{5}{|c}{\textbf{Cancellation}}\\

\midrule

& \cellcolor{gray!45}\rotatebox{90}{\textbf{FPScan}} 
& \cellcolor{gray!45}\textbf{FP} 
& \textbf{TP} 
& \cellcolor{gray!25}\textbf{FN} 
& \textbf{TN} 
& \cellcolor{gray!45}\rotatebox{90}{\textbf{FPScan}} 
& \cellcolor{gray!45}\textbf{FP} 
& \textbf{TP} 
& \cellcolor{gray!25}\textbf{FN} 
& \textbf{TN}\\

\textbf{triangle}                            & 0     & \cellcolor{gray!45} \textbf{0}   & {\textbf{\color{gray!75}0}}    & \cellcolor{gray!25} 0 & 10    & 0     & \cellcolor{gray!45} \textbf{0}    & {\textbf{\color{gray!75}0}}   & \cellcolor{gray!25} 0 & 5     \\
\textbf{bspline3}                            & 0     & \cellcolor{gray!45} \textbf{0}   & {\textbf{\color{gray!75}0}}   & \cellcolor{gray!25} 0 & 2     & 0     & \cellcolor{gray!45} \textbf{0}    & {\textbf{\color{gray!75}0}} & \cellcolor{gray!25} 0 & 1     \\
\textbf{test06\_sums4\_\_sum(x2)}            & 12    & \cellcolor{gray!45} \textbf{0}   & 12                    & \cellcolor{gray!25} 0 & 16    & 6     & \cellcolor{gray!45} \textbf{0}    & 6                     & \cellcolor{gray!25} 0 & 2     \\
\textbf{test05\_nonlin1\_\_test2}            & 0     & \cellcolor{gray!45} \textbf{0}   & {\textbf{\color{gray!75}0}}    & \cellcolor{gray!25} 0 & 2     & 0     & \cellcolor{gray!45} \textbf{0}    & {\textbf{\color{gray!75}0}}    & \cellcolor{gray!25} 0 & 1     \\
\textbf{test05\_nonlin1\_\_r4}               & 0     & \cellcolor{gray!45} \textbf{0}   & {\textbf{\color{gray!75}0}}    & \cellcolor{gray!25} 0 & 4     & 0     & \cellcolor{gray!45} \textbf{0}    & {\textbf{\color{gray!75}0}}    & \cellcolor{gray!25} 0 & 2     \\
\textbf{test03\_nonlin2}                     & 2     & \cellcolor{gray!45} \textbf{0}   & 2                     & \cellcolor{gray!25} 0 & 6     & 1     & \cellcolor{gray!45} \textbf{0}    & 1                     & \cellcolor{gray!25} 0 & 2     \\
\textbf{test02\_sum8}                        & 0     & \cellcolor{gray!45} \textbf{0}   & {\textbf{\color{gray!75}0}}    & \cellcolor{gray!25} 0 & 14    & 0     & \cellcolor{gray!45} \textbf{0}    & {\textbf{\color{gray!75}0}}    & \cellcolor{gray!25} 0 & 7     \\
test01\_sum3                                 & 4     & 1                                & 3                     & \cellcolor{gray!25} 0 & 15    & 5     & 2                                 & 3                     & \cellcolor{gray!25} 0 & 3     \\
\textbf{nonlin1}                             & 1     & \cellcolor{gray!45} \textbf{0}   & 1                     & \cellcolor{gray!25} 0 & 2     & 0     & \cellcolor{gray!45} \textbf{0}    & {\textbf{\color{gray!75}0}}    & \cellcolor{gray!25} 0 & 1     \\
\textbf{nonlin2}                             & 0     & \cellcolor{gray!45} \textbf{0}   & {\textbf{\color{gray!75}0}}    & \cellcolor{gray!25} 0 & 4     & 0     & \cellcolor{gray!45} \textbf{0}    & {\textbf{\color{gray!75}0}}    & \cellcolor{gray!25} 0 & 2     \\
\textbf{verhulst}                            & 0     & \cellcolor{gray!45} \textbf{0}   & {\textbf{\color{gray!75}0}}    & \cellcolor{gray!25} 0 & 2     & 0     & \cellcolor{gray!45} \textbf{0}    & {\textbf{\color{gray!75}0}}    & \cellcolor{gray!25} 0 & 1     \\
\textbf{x\_by\_xy}                           & 0     & \cellcolor{gray!45} \textbf{0}   & {\textbf{\color{gray!75}0}}    & \cellcolor{gray!25} 0 & 2     & 0     & \cellcolor{gray!45} \textbf{0}    & {\textbf{\color{gray!75}0}}    & \cellcolor{gray!25} 0 & 1     \\
\textbf{predatorPrey}                        & 0     & \cellcolor{gray!45} \textbf{0}   & {\textbf{\color{gray!75}0}}    & \cellcolor{gray!25} 0 & 2     & 0     & \cellcolor{gray!45} \textbf{0}    & {\textbf{\color{gray!75}0}}    & \cellcolor{gray!25} 0 & 1     \\
\textbf{hypot(x2)}                           & 0     & \cellcolor{gray!45} \textbf{0}   & {\textbf{\color{gray!75}0}}    & \cellcolor{gray!25} 0 & 4     & 0     & \cellcolor{gray!45} \textbf{0}    & {\textbf{\color{gray!75}0}}    & \cellcolor{gray!25} 0 & 2     \\
\textbf{sec4\_example}                       & 0     & \cellcolor{gray!45} \textbf{0}   & {\textbf{\color{gray!75}0}}    & \cellcolor{gray!25} 0 & 4     & 0     & \cellcolor{gray!45} \textbf{0}    & {\textbf{\color{gray!75}0}}    & \cellcolor{gray!25} 0 & 2     \\
\textbf{intro\_example}                      & 1     & \cellcolor{gray!45} \textbf{0}   & 1                     & \cellcolor{gray!25} 0 & 2     & 0     & \cellcolor{gray!45} \textbf{0}    & {\textbf{\color{gray!75}0}}    & \cellcolor{gray!25} 0 & 1     \\
\textbf{intro\_example\_mixed}               & 0     & \cellcolor{gray!45} \textbf{0}   & {\textbf{\color{gray!75}0}}    & \cellcolor{gray!25} 0 & 2     & 0     & \cellcolor{gray!45} \textbf{0}    & {\textbf{\color{gray!75}0}}    & \cellcolor{gray!25} 0 & 1     \\
\textbf{smartRoot}                           & 0     & \cellcolor{gray!45} \textbf{0}   & {\textbf{\color{gray!75}0}}    & \cellcolor{gray!25} 0 & 12    & 0     & \cellcolor{gray!45} \textbf{0}    & {\textbf{\color{gray!75}0}}   & \cellcolor{gray!25} 0 & 6     \\
\textbf{sqrt\_add}                           & 0     & \cellcolor{gray!45} \textbf{0}   & {\textbf{\color{gray!75}0}}    & \cellcolor{gray!25} 0 & 4     & 0     & \cellcolor{gray!45} \textbf{0}    & {\textbf{\color{gray!75}0}}    & \cellcolor{gray!25} 0 & 2     \\
\textbf{sqroot}                              & 4     & \cellcolor{gray!45} \textbf{0}   & 4                     & \cellcolor{gray!25} 0 & 8     & 0     & \cellcolor{gray!45} \textbf{0}    & {\textbf{\color{gray!75}0}}    & \cellcolor{gray!25} 0 & 4     \\
\textbf{squareRoot3(x2)}                     & 6     & \cellcolor{gray!45} \textbf{0}   & 6                     & \cellcolor{gray!25} 0 & 12    & 2     & \cellcolor{gray!45} \textbf{0}    & 2                     & \cellcolor{gray!25} 0 & 4     \\
\textbf{carbonGas}                           & 1     & \cellcolor{gray!45} \textbf{0}   & 1                     & \cellcolor{gray!25} 0 & 6     & 0     & \cellcolor{gray!45} \textbf{0}    & {\textbf{\color{gray!75}0}}    & \cellcolor{gray!25} 0 & 3     \\
\textbf{turbine1}                            & 1     & \cellcolor{gray!45} \textbf{0}   & 1                     & \cellcolor{gray!25} 0 & 12    & 1     & \cellcolor{gray!45} \textbf{0}    & 1                     & \cellcolor{gray!25} 0 & 5     \\
\textbf{turbine2}                            & 1     & \cellcolor{gray!45} \textbf{0}   & 1                     & \cellcolor{gray!25} 0 & 8     & 2     & \cellcolor{gray!45} \textbf{0}    & 2                     & \cellcolor{gray!25} 0 & 2     \\
\textbf{turbine3}                            & 1     & \cellcolor{gray!45} \textbf{0}   & 1                     & \cellcolor{gray!25} 0 & 12    & 1     & \cellcolor{gray!45} \textbf{0}    & 1                     & \cellcolor{gray!25} 0 & 5     \\
kepler0                                      & 5     & 4                                & 1                     & \cellcolor{gray!25} 0 & 16    & 6     & 2                                 & 4                     & \cellcolor{gray!25} 0 & 4     \\
cav10                                        & 3     & 1                                & 2                     & \cellcolor{gray!25} 0 & 3     & 1     & \cellcolor{gray!45} \textbf{0}    & 1                     & \cellcolor{gray!25} 0 & 1     \\
\textbf{floudas}                             & 2     & \cellcolor{gray!45} \textbf{0}   & 2                     & \cellcolor{gray!25} 0 & 2     & 1     & \cellcolor{gray!45} \textbf{0}    & 1                     & \cellcolor{gray!25} 0 & 0     \\
\textbf{floudas1}                            & 20    & \cellcolor{gray!45} \textbf{0}   & 20                    & \cellcolor{gray!25} 0 & 36    & 17    & \cellcolor{gray!45} \textbf{0}    & 17                    & \cellcolor{gray!25} 0 & 1     \\
floudas3                                     & 4     & 1                                & 3                     & \cellcolor{gray!25} 0 & 5     & 2     & \cellcolor{gray!45} \textbf{0}    & 2                     & \cellcolor{gray!25} 0 & 1     \\
sum                                          & 4     & 1                                & 3                     & \cellcolor{gray!25} 0 & 15    & 5     & 2                                 & 3                     & \cellcolor{gray!25} 0 & 3     \\
\textbf{doppler(x3)}                         & 9     & \cellcolor{gray!45} \textbf{0}   & 9                     & \cellcolor{gray!25} 0 & 36    & 0     & \cellcolor{gray!45} \textbf{0}    & {\textbf{\color{gray!75}0}}    & \cellcolor{gray!25} 0 & 18    \\
\textbf{rigidBody1}                          & 6     & \cellcolor{gray!45} \textbf{0}   & 6                     & \cellcolor{gray!25} 0 & 14    & 3     & \cellcolor{gray!45} \textbf{0}    & 3                     & \cellcolor{gray!25} 0 & 4     \\
\textbf{himmilbeau}                          & 8     & \cellcolor{gray!45} \textbf{0}   & 8                     & \cellcolor{gray!25} 0 & 14    & 5     & \cellcolor{gray!45} \textbf{0}    & 5                     & \cellcolor{gray!25} 0 & 2     \\
\textbf{matrixDeterminant(x2)}               & 20    & \cellcolor{gray!45} \textbf{0}   & 20                    & \cellcolor{gray!25} 0 & 56    & 10    & \cellcolor{gray!45} \textbf{0}    & 10                    & \cellcolor{gray!25} 0 & 18    \\

\bottomrule
\end{tabular}
}

Programs annotated with “(x2)” or “(x3)” denote groups of two or three similar benchmarks. FPScan col. contains the number of detected pitfalls.
\vspace{-2em}
\end{table}

\begin{figure*}[h]
    \centering
    \includegraphics[width=0.8\textwidth]{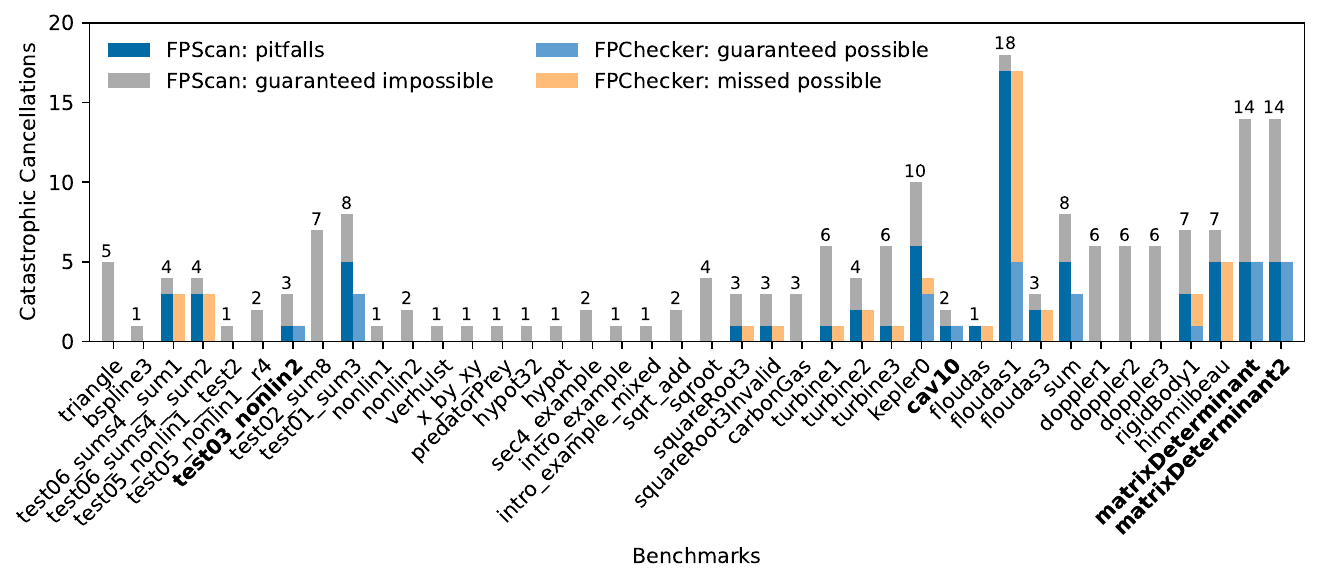}
    \caption{Comparison of catastrophic cancellation detection by FPScan and FPChecker. The FPScan bar is split into cancellations detected by FPScan and cancellations proven impossible by FPScan, while the FPChecker bar is split into cancellations detected by FPChecker and occurring cancellations missed by FPChecker.}\vspace{-1em}
    \label{fig::fps_fpc}
\end{figure*}
\subsection{Experimental Setup}
To answer RQ1, we compare FPScan against a ground truth produced by a sound and complete \textit{bitblasting-based} tool that we developed for this purpose. \textit{Bitblasting} is a technique used by SMT solvers to reason about \fp arithmetic by encoding \fp values as bit vectors, possibly after dedicated optimizations~\cite{brillout_bit_blast}. Although sound and complete for decidable theories, this approach remains too computationally expensive for large-scale practical use. Our implementation relies on Z3~\cite{Z3} and uses the following detection criteria: (i) catastrophic cancellation is reported when all bits of the result are lost, and (ii) absorption is reported when the least significant bit of one operand is smaller than the most significant bit of the other. This definition of catastrophic cancellation is stricter than the one implemented by FPScan. Consequently, some genuine cancellations may be incorrectly tagged as spurious by bitblasting. Although this does not undermine the overall quality of the evaluation, it may lead to an underestimation of FPScan's completeness.

We evaluate FPScan on a subset of the FPBench benchmark suite~\cite{damouche2016toward}, which contains 130 numerical programs drawn from numerical analysis papers and textbooks; computations are performed in \textit{binary32} format.
We retain 58 benchmarks whose preconditions are interval based and compatible with our tool.
Among them, 41 are also compatible with \textit{bitblasting}. 

We also compare FPScan with FPChecker~\cite{laguna_fpchecker_2019, laguna_fpchecker_2022}, a state-of-the-art dynamic tool for detecting \fp catastrophic cancellations. FPChecker reports a cancellation when the number of significant bits in an addition or subtraction falls below 10. Although FPChecker also detects other numerical issues, such as overflow and underflow, it does not detect absorption. We therefore compare FPChecker and FPScan only with respect to cancellation detection. Since FPChecker requires concrete inputs, we sample each input interval using 1000 points. All experiments were conducted on a machine running Ubuntu 22.04.5, equipped with an Intel i7 1370P processor and 16 GB of memory.

\begin{figure*}[tb]
\vspace{-1em}
\makebox[\textwidth][c]{
    \vspace{-1em}
    \includegraphics[width=0.85\textwidth]{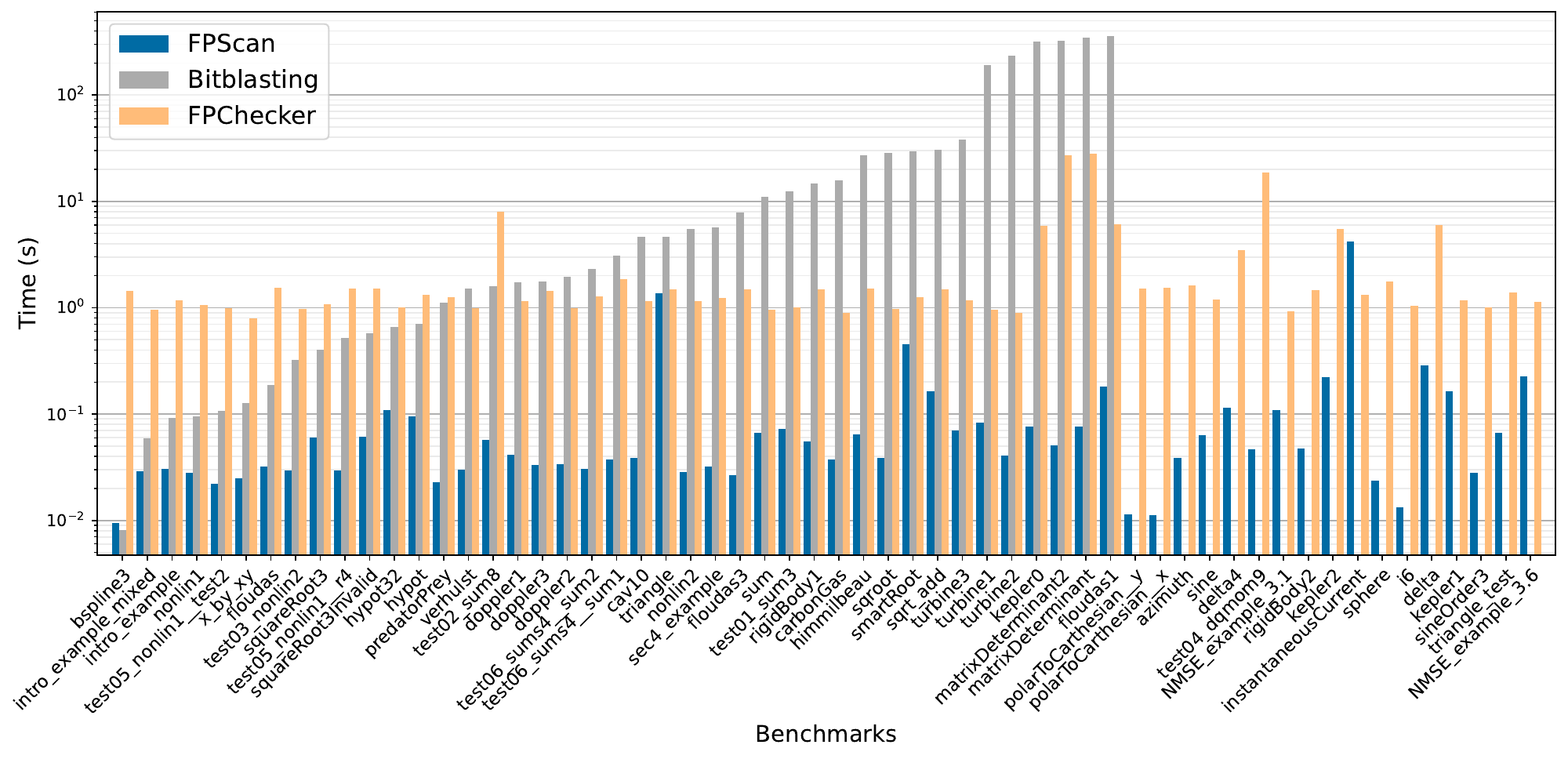}
    }
\vspace{-2em}    \caption{ Computation time (s) for FPScan, FPChecker~(1000 point sampled)~
    \cite{laguna_fpchecker_2019,laguna_fpchecker_2022}, and  \textit{bitblasting}. }
    \vspace{-1em}
    \label{fig::tps_results}
\end{figure*}

\subsection{\textbf{RQ1}: Soundness Validation and Completeness Evaluation.}

Table~\ref{tab::results} reports the comparison between FPScan and the
\textit{bitblasting-based} ground truth. We classify all detected pitfalls into
four categories: true positives (TP), corresponding to pitfalls reported as
possible by FPScan that indeed occur; true negatives (TN), corresponding to
pitfalls proven impossible by FPScan that do not occur; false positives (FP),
corresponding to pitfalls reported as possible by FPScan that are actually
impossible; and false negatives (FN), corresponding to pitfalls proven
impossible by FPScan that actually occur.

FPScan is sound by construction; consequently, the FN columns (highlighted in
light gray) in Table~\ref{tab::results} contain only zeros. This means that
every pitfall proven impossible by FPScan is indeed impossible. 

FPScan is also effective at proving the absence of pitfalls: out of 665 checked
cases, 482 (72\%) are proven impossible.

Among the 41 benchmark analyses, 36 contain no false positives.
Overall, only 14 (8\%) of the 183 reported pitfalls are false positives,
indicating that the number of spurious reports is relatively limited. This
incompleteness is a consequence of FPScan's abstraction of machine states: it
includes states that are not actually reachable. Consequently, FPScan may fail
to prove that some non-occurring pitfalls cannot occur. These false positives
are reported in the FP columns of Table~\ref{tab::results}.

Figure~\ref{fig::fps_fpc} shows a comparison of the detection capabilities of
FPScan and FPChecker. FPChecker fails to detect any catastrophic cancellations
on some programs, due to its sampling strategy, which is effective for certain
programs but less suitable for others. Addressing this limitation by designing a
sampling strategy that is robust across programs is left for future work.
Additionally, Figure~\ref{fig::fps_fpc} graphically illustrates the accuracy of
FPScan. For most programs, the blue portion of the left bar, representing the
catastrophic cancellations detected by FPScan, closely matches the right bar,
which represents all occurring catastrophic cancellations.

Figure~\ref{fig::fps_fpc} also highlights a potential synergy between FPScan and
FPChecker. For all programs shown in bold, the catastrophic cancellations detected
by FPScan and FPChecker exactly match. This is particularly interesting because,
in these cases, catastrophic cancellations are divided into two categories: they
are either proven impossible by FPScan or guaranteed possible by FPChecker
through a counterexample. The two tools provide complementary information.

\subsection{\textbf{RQ2} Execution Time Evaluation}

We measure the execution time of FPScan, FPChecker and \textit{bitblasting} on the 58 benchmarks in the FPBench subset. The results are shown in Figure~\ref{fig::tps_results}. Programs are ordered by increasing \textit{bitblasting} execution time. Some programs have no gray bar because \textit{bitblasting} either timed out or does not support some required operations. Due to the large variation in computation time, a logarithmic scale is used for the vertical axis.

The execution time of FPChecker is proportional to the number of samples used in this experiment, namely 1000. For reference, its execution time is comparable to FPScan with 100 samples.

Three programs exhibit significantly higher execution times, namely \textit{triangle}, \textit{smartRoot}, and \textit{instantaneousCurrent}. This is due to the interaction between the iterative over-approximation of square root~\cite{daumas_verified_2007} and the use of rational numbers in the interval analysis.
Despite issuing one SMT solver query per pitfall check, our method remains efficient. It is comparable to FPChecker and significantly outperforms \textit{bitblasting} in terms of execution time. This efficiency stems from the simple structure of the constraint system, which involves integer variables, linear operations, \texttt{min} and \texttt{max} functions, and a limited number of conjunctions. Although scalability remains a concern, as each statement increases the constraint size and may trigger up to three SMT solver queries, the approach still demonstrates strong practical performance on the evaluated benchmarks. Future work will investigate strategies to improve scalability and reduce the number of solver queries per statement.

\section{Related Work}
\label{section::related}
A large body of work focuses on bounding rounding errors~\cite{Gappa,precisa,fluctuat,astree2002,EVA2017,jezequel_cadna_2008}
or optimizing \fp expressions~\cite{panchekha_automatically_2015,sardana2012,POP2019,precimonious}. 
All these works focus on the characterization of the error, usually computing its upper bound. This is not the focus of this paper.
Our primary objective is to formally prove the absence of \fp pitfalls that can cause severe precision loss. This can help programmers identify weaknesses of \fp expressions.

\subsection{$\ufp$, $\ulp$, and $\nsb$ constraint-based analysis}

Martel~\cite{martel2017floating} gives definitions of $\ufp$ and $\ulp$ that translate order of magnitude of \fp numbers and rounding errors into integer values. 
He uses them to compute errors of constants to propagate them in abstract domains. We rely on a similar approach by formalizing $\ufp$ and $\ulp$ as constraints and extending their definitions to handle denormalized numbers.

Ben~Khalifa~et~al.~\cite{POP2019} provide a sound constraint-based precision-tuning tool. They leverage the definition of $\ufp$ and $\ulp$ by Martel~\cite{martel2017floating} to establish constraints linking the precisions of all variables of a program. Solving these constraints yields a set of intermediate variable precisions that make the output meet a target precision. 
We rely on a similar approach: producing constraints to model program behavior, but our constraint set is completely different.
We also develop a new constraint set including $\nsb$ and refining error propagation. In addition, rather than targeting the satisfiability of these constraints, we target their unsatisfiability to prove the absence of pitfalls.

\subsection{Detection of Floating-Point Pitfalls}
We can divide the state-of-the-art on the detection of floating-point pitfalls in two categories: dynamic and static methods. The former execute the programs, yielding unsound yet complete results. The latter reason about the programs yielding sound yet incomplete results.

\paragraph{Dynamic Methods}
Most existing approaches are dynamic. The program is executed on a set of
inputs. These methods are often fast and scalable to large and complex programs.
However, they are in practice computationally expensive, both in terms of
execution time and memory consumption, especially when a large number of inputs
must be explored. Moreover, their results are only valid for the explored
inputs, with no guarantees over the rest of the input space. In this context,
Lam et al.~\cite{LamHS13}, Benz et al.~\cite{benz},
Jézéquel~et~al.\cite{jezequel_cadna_2008}, and Laguna et
al.~\cite{laguna_fpchecker_2019,laguna_fpchecker_2022} propose similar
approaches. Their tools  analyze programs to estimate the number of bits lost
after each addition and subtraction. A  catastrophic cancellation is reported
when this number drops below a given threshold.

Ben Khalifa et al.~\cite{khalifa_toward_2022} rely on shadow execution at a
higher precision to approximate \fp errors. They distinguish two types of
cancellation: catastrophic cancellation, which occurs when the magnitude of the
accumulated error becomes comparable to or larger than the magnitude of the
computed result during an addition or subtraction, and benign cancellation,
where only part of the significant bits are lost. We formalized their
characterization of catastrophic cancellation as a constraint in our approach.

Overall, these dynamic techniques target cancellations rather than absorptions.
They achieve good performance, but are not sound unless the entire input space
is explored, which is generally infeasible in practice. Our method uses a
constraint-based abstraction of reachable machine states to soundly reason over
all executions and prove the absence of both catastrophic cancellations and
absorptions.

\paragraph{Static Methods}

Static methods infer the behavior of a program over all possible inputs by
analyzing its code rather than executing it. The goal is to efficiently compute
properties that hold for all program executions. To the best of our knowledge,
only Lopes et al.~\cite{lopes2018} address the sound detection of both
absorption and  catastrophic cancellation. Their approach first performs a range
analysis to bound all intermediate variables, and then applies decision rules to
determine whether a numerical pitfall may occur. In their definition, an
absorption  occurs when the difference between the result of an operation and
one of the operands is smaller than the rounding error. However, the paper does
not discuss denormalized numbers, so it is difficult to assess whether the
method extends to them soundly. Moreover, the treatment of loops is not
described in sufficient detail to determine how pitfall conditions are
propagated across iterations. Our approach addresses these issues by supporting
denormalized values, soundly analysing loops up to a given unrolling bound, and
taking rounding errors into account in the catastrophic cancellation detection
criterion. However, we could not directly compare our approach with
their tool because it is not freely available.

\section{Conclusion}
\label{section::conclusion}
We presented FPScan, a novel tool and approach to formally detect both
catastrophic cancellation and absorption in \fp programs. Our approach combines
a range analysis using abstract interpretation with an axiomatization as a set
of logical constraints of the propagation of numerical errors in the program.
Proving the absence of \fp pitfalls amount to prove that a formula is
unsatisfiable with an SMT solver. Experimental results show that FPScan is both
precise and efficient, proving 72\% of cases impossible with only 8\% false
positives, while outperforming bitblasting and comparing favorably with
FPChecker.

%
Future work will extend FPScan to mixed-precision programs.
We also plan to rely on identified pitfalls to propose guaranteed repairs, for
example, using precision tuning~\cite{POP2019}, expression
transformations~\cite{panchekha_automatically_2015}, and Taylor
expansions~\cite{panchekha_automatically_2015}. 

\bibliographystyle{IEEEtran}
\bibliography{bibli.bib}
\newpage
\appendix
In this section, we provide the proofs of the theorems presented in Section~\ref{section::detection}.
We first recall the soundness proof for Addition/Subtraction.

Note that the proofs are valid in the two edge cases of zero and denormalized numbers. This follows from the relations between rounding error and $\ulp$, and between order of magnitude and $\ufp$ established by Eqs.\eqref{def::ufp::bound} and \eqref{def::ulp::error}, respectively. Both equations are valid for zero and denormalized numbers.

\subsection{Soundness of abstract addition/subtraction (cf. Def.~\ref{cstr::add::theorem})}
\label{section::annex::proof::add_sub}

Let $\varepsilon_x$, $\varepsilon_y$, and $\varepsilon_z$ denote the \fp errors: 
\begin{equation}\small
    \begin{array}{l}
        \varepsilon_x = x - \tilde{x} \Leftrightarrow \tilde{x} = x - \varepsilon_x \Leftrightarrow x = \tilde{x} + \varepsilon_x  \enspace ; \\
        \varepsilon_y = y - \tilde{y} \Leftrightarrow \tilde{y} = y - \varepsilon_y \Leftrightarrow y = \tilde{y} + \varepsilon_y  \enspace ; \\
        \varepsilon_z = z - \tilde{z} \Leftrightarrow \tilde{z} = z - \varepsilon_z \Leftrightarrow z = \tilde{z} + \varepsilon_z \enspace .  \\
    \end{array}
\end{equation}
By applying Eq.~\eqref{def::nsb::eq}, a bound on \( \nsb_{\sigma}(z_v) \)
can be derived from a bound on \( \ufp(\varepsilon_z) \). This bound,
in turn, can be obtained by bounding \( |\varepsilon_z| \) using
Eq.~\eqref{def::ufp::bound}. An explicit expression for \( \varepsilon_z \)
is obtained through a series of manipulations, as demonstrated in
Eq.~\eqref{cstr::add::proof::epsilon_z}. Here, we denote by \( e_+ \) the
rounding error introduced by the addition.

\begin{equation}\small
\begin{aligned}
\tilde{z} &= \round(\tilde{x} + \tilde{y}) = \tilde{x} + \tilde{y} + e_+ \\
          &= (x - \varepsilon_x) + (y - \varepsilon_y) + e_+
           = \underbrace{x + y}_{z} - (\underbrace{\varepsilon_x + \varepsilon_y - e_+}_{\varepsilon_z})
\end{aligned}
\label{cstr::add::proof::epsilon_z}
\end{equation}

 The error $|\varepsilon_z|$ can be bounded using the triangle inequality as shown

\begin{equation}
    |\varepsilon_z| \leq |\varepsilon_x| + |\varepsilon_y| + |e_+| \enspace .
    \label{cstr::add::proof::abs_epsilon_z}
\end{equation}

The errors $|\varepsilon_x|$ and $|\varepsilon_y|$ can be bounded using
Eq.~\eqref{def::ufp::bound} and $e_+$ is a simple rounding error; thus,
$|e_+| \leq 2^{\ulp(\tilde{z}) - 1}$ using Eq.~\eqref{def::ulp::error}.
Hence, Eq.~\eqref{cstr::add::proof::abs_epsilon_z} can be rewritten as
Eq.~\ref{cstr::add::proof::3_2_epsilon_z}.

\begin{equation}
    |\varepsilon_z| < 2^{\ufp(\varepsilon_x) + 1} + 2^{\ufp(\varepsilon_y) + 1} + 2^{\ulp(\tilde{z}) - 1} \enspace .
    \label{cstr::add::proof::3_2_epsilon_z}
\end{equation}

Eq.~\eqref{cstr::add::proof::3_2_epsilon_z} is the sum of three powers of
two. It can thus be bounded as shown in Eq.~\eqref{cstr::add::proof::2_epsilon_z}.

\begin{equation}
    |\varepsilon_z| < 2^{\max(\ufp(\varepsilon_x) + 1, \ufp(\varepsilon_y) + 1, \ulp(\tilde{z})-1) + 2} \enspace .
    \label{cstr::add::proof::2_epsilon_z}
\end{equation}

We can deduce $\ufp(\varepsilon_z)$ from Eq.~\eqref{cstr::add::proof::2_epsilon_z} using Eq.~\eqref{def::ufp::bound}
and thus obtain

\begin{equation}
    \ufp(\varepsilon_z) \leq \max(\ufp(\varepsilon_x) + 1, \ufp(\varepsilon_y) + 1, \ulp(\tilde{z})-1) + 1 \enspace .
    \label{cstr::add::proof::ufp_epsilon_z}
\end{equation}

Eq.~\eqref{cstr::add::proof::ufp_epsilon_z} together with Eq.~\eqref{def::nsb::eq} allows us to obtain the desired result.

\subsection{Soundness of abstract multiplication (cf. Def.~\ref{cstr::mult::theorem})}
\label{section::annex::proof::mult}
Let $\sigma$ denotes a machine state and $x_v,~y_v,~z_v$ denote three variables.
$x$, $y$ and $z$ denotes their respective real values, $x =
\sigma_\mathbb{R}(x_v)$, $y = \sigma_\mathbb{R}(y_v)$ and $z =
\sigma_\mathbb{R}(z_v)$. $\tilde{x}$, $\tilde{y}$ and $\tilde{z}$ denote their
respective \fp values, $\tilde{x} = \sigma_{\mathbb{F}_p}(x_v)$,
$\tilde{y} = \sigma_{\mathbb{F}_p}(y_v)$ and $\tilde{z} =
\sigma_{\mathbb{F}_p}(z_v)$.

\begin{equation}
    \begin{array}{l}
        \varepsilon_x = x - \tilde{x} \Leftrightarrow \tilde{x} = x - \varepsilon_x \Leftrightarrow x = \tilde{x} + \varepsilon_x  \enspace ;  \\
        \varepsilon_y = y - \tilde{y} \Leftrightarrow \tilde{y} = y - \varepsilon_y \Leftrightarrow y = \tilde{y} + \varepsilon_y    \enspace ;\\
        \varepsilon_z = z - \tilde{z} \Leftrightarrow \tilde{z} = z - \varepsilon_z \Leftrightarrow z = \tilde{z} + \varepsilon_z    \enspace .\\
    \end{array}
\end{equation}

Using the same reasoning as in the proof of Section~\ref{section::annex::proof::add_sub}, the constraint will bound
$\nsb_\sigma(z_v)$. Eq.~(\ref{cstr::mult::proof::epslion_z}) shows how to
bound $\varepsilon_z$. We denote by $e_\times$ the rounding error introduced by
the multiplication.

\begin{align}
    & z = x \times y \nonumber\\
    & z = (\tilde{x} + \varepsilon_x) \times (\tilde{y} + \varepsilon_y) \nonumber\\
    & z = (\tilde{x} \times \tilde{y}) + \tilde{x}\varepsilon_y + \tilde{y}\varepsilon_x + \varepsilon_x\varepsilon_y \nonumber\\
    & z = \underbrace{\round(\tilde{x} \times \tilde{y})}_{\tilde{z}} + (\underbrace{\tilde{x}\varepsilon_y + \tilde{y}\varepsilon_x + \varepsilon_x\varepsilon_y - e_\times}_{\varepsilon_z}) \label{cstr::mult::proof::epslion_z}
\end{align}

The error $|\varepsilon_z|$ can be bounded using triangle inequality:

\begin{equation}
    |\varepsilon_z| \leq
        \underbrace{|\tilde{x}\varepsilon_y|}_{b_1}
        + \underbrace{|\tilde{y}\varepsilon_x|}_{b_2}
        + \underbrace{|\varepsilon_x\varepsilon_y|}_{b_3}
        + |e_\times| \enspace .
    \label{cstr::mult::proof::epsilon_z_a}
\end{equation}

The digits $b_1$, $b_2$, $b_3$ and the error $e_\times$ can be bounded independently. Equations
(\ref{cstr::mult::proof::b1::1}), (\ref{cstr::mult::proof::b1::2}) and
(\ref{cstr::mult::proof::b3::1}) can be obtained using equations
(\ref{def::ufp::bound}) and (\ref{def::nsb::eq}).
\begin{align}
    & |\tilde{x}| < 2^{\ufp(\tilde{x}) + 1} \label{cstr::mult::proof::b1::1} \enspace .\\
    & |\tilde{y}| < 2^{\ufp(\tilde{y}) + 1} \label{cstr::mult::proof::bi::2} \enspace .\\
    & |\varepsilon_y| < 2^{\ufp(\tilde{y}) - \nsb_\sigma(y_v) + 1} \label{cstr::mult::proof::b1::2} \enspace .\\
    & |\varepsilon_x| < 2^{\ufp(\tilde{x}) - \nsb_\sigma(x_v) + 1} \label{cstr::mult::proof::b3::1} \enspace .
\end{align}

Together, equations (\ref{cstr::mult::proof::b1::1}) and
(\ref{cstr::mult::proof::b1::2}) allow to bound $b_1$ as shown in Equations
(\ref{cstr::mult::proof::b1}). Eq. (\ref{cstr::mult::proof::b2}) is obtained
using a similar reasoning.

\begin{align}
    & b_1 = |\tilde{x}\varepsilon_y| < 2^{\ufp(\tilde{x}) + \ufp(\tilde{y}) - \nsb_\sigma(y_v) + 2} \label{cstr::mult::proof::b1} \enspace .\\
    & b_2 = |\tilde{y}\varepsilon_x| < 2^{\ufp(\tilde{x}) + \ufp(\tilde{y}) - \nsb_\sigma(x_v) + 2} \label{cstr::mult::proof::b2} \enspace .
\end{align}

Eq. (\ref{cstr::mult::proof::b3}) is obtained using Equations
(\ref{cstr::mult::proof::b1::2}) and (\ref{cstr::mult::proof::b3::1}).

\begin{equation}
    b_3 = |\varepsilon_x\varepsilon_y| < 2^{\ufp(\tilde{x}) + \ufp(\tilde{y}) - \nsb_\sigma(x_v) - \nsb_\sigma(y_v) + 2} \label{cstr::mult::proof::b3} \enspace .
\end{equation}

Finally, we obtain Eq. (\ref{cstr::mult::proof::e_x}) using Eq.
(\ref{def::ulp::error}) since $e_\times$ is a rounding error:
\begin{equation}
    |e_\times| \leq 2^{\ulp(\tilde{z}) - 1} \enspace .
    \label{cstr::mult::proof::e_x}
\end{equation}

Eq. (\ref{cstr::mult::proof::epsilon_z_a}) can be rewritten using
equations (\ref{cstr::mult::proof::b1}) to  (\ref{cstr::mult::proof::e_x}) to  finally obtain
Eq. (\ref{cstr::mult::proof::4_2_epsilon_z}).

\begin{equation}
    \begin{array}{ll}
        |\varepsilon_z| <   & ~ 2^{\ufp(\tilde{x}) + \ufp(\tilde{y}) - \nsb_\sigma(y_v) + 2} \\
        & + 2^{\ufp(\tilde{x}) + \ufp(\tilde{y}) - \nsb_\sigma(x_v) + 2} \\
        & + 2^{\ufp(\tilde{x}) + \ufp(\tilde{y}) - \nsb_\sigma(x_v) - \nsb_\sigma(y_v) + 2} \\
        & + 2^{\ulp(\tilde{z}) - 1}
    \end{array} \enspace .
    \label{cstr::mult::proof::4_2_epsilon_z}
\end{equation}

Since Eq.~(\ref{cstr::mult::proof::4_2_epsilon_z}) is the sum of four terms, each a power of two, it can be bounded by a single power of two. Let \( M \) denote the maximum exponent, as defined in Eq.~(\ref{cstr::mult::proof::M}).

\begin{equation}\small
    \resizebox{0.9\linewidth}{!}{$
    \begin{array}{rl}
        M &=  \max\left(\begin{array}{l}
                                    \ufp(\tilde{x}) + \ufp(\tilde{y}) - \nsb_\sigma(y_v) + 2   \\
                                    \ufp(\tilde{x}) + \ufp(\tilde{y}) - \nsb_\sigma(x_v) + 2   \\
                                    \ufp(\tilde{x}) + \ufp(\tilde{y}) - \nsb_\sigma(x_v) - \nsb_\sigma(y_v) + 2 \\
                                    \ulp(\tilde{z}) - 1
                                \end{array}\right)\\
                            &= \max\left(\begin{array}{l}
                                    \ufp(\tilde{x}) + \ufp(\tilde{y}) - \min(\nsb_\sigma(x_v), \nsb_\sigma(y_v)) + 2 \\
                                    \ufp(\tilde{x}) + \ufp(\tilde{y}) - \nsb_\sigma(x_v) - \nsb_\sigma(y_v) + 2 \\
                                    \ulp(\tilde{z}) - 1
                                \end{array}\right)
    \end{array}
    $} \enspace .
    \label{cstr::mult::proof::M}
\end{equation}
It is thus possible to establish a bound on \( |\varepsilon_z| \). Given the maximum exponent \( M \), this upper bound is attained when all four exponents are equal. Under such circumstances, two carry bits may arise, which yields \( |\varepsilon_z| < 2^{M+2} \). Thus, using Eq.~(\ref{def::ufp::bound}) we obtain \( \ufp(\varepsilon_z) \leq M + 1 \) and then Eq.~(\ref{cstr::mult::proof::end}) replacing \( M \) by its value.

\begin{equation}\small
\resizebox{0.9\linewidth}{!}{$
\ufp(\varepsilon_z) \leq \max\left(\begin{array}{l}
        \ufp(\tilde{x}) + \ufp(\tilde{y}) - \min(\nsb_\sigma(x_v), \nsb_\sigma(y_v)) + 3 \\
        \ufp(\tilde{x}) + \ufp(\tilde{y}) - \nsb_\sigma(x_v) - \nsb_\sigma(y_v) + 3 \\
        \ulp(\tilde{z})
    \end{array}\right)
    \label{cstr::mult::proof::end}
$} \enspace .
\end{equation}

\noindent
Equations (\ref{cstr::mult::proof::end}) and (\ref{def::nsb::eq}) allow to conclude.

\hfill $\blacksquare$


\subsection{Soundness of abstract division (cf. Def.~\ref{cstr::div::theorem})}
\label{section::annex::proof::div}

Let $\sigma$ denotes a machine state and $x_v,~y_v,~z_v$ denote three variables.
$x$, $y$ and $z$ denotes their respective real values, $x =
\sigma_\mathbb{R}(x_v)$, $y = \sigma_\mathbb{R}(y_v)$ and $z =
\sigma_\mathbb{R}(z_v)$. $\tilde{x}$, $\tilde{y}$ and $\tilde{z}$ denote their
respective \fp values, $\tilde{x} = \sigma_{\mathbb{F}_p}(x_v)$,
$\tilde{y} = \sigma_{\mathbb{F}_p}(y_v)$ and $\tilde{z} =
\sigma_{\mathbb{F}_p}(z_v)$

\begin{equation}
    \begin{array}{l}
        \varepsilon_x = x - \tilde{x} \Leftrightarrow \tilde{x} = x - \varepsilon_x \Leftrightarrow x = \tilde{x} + \varepsilon_x \enspace ;   \\
        \varepsilon_y = y - \tilde{y} \Leftrightarrow \tilde{y} = y - \varepsilon_y \Leftrightarrow y = \tilde{y} + \varepsilon_y \enspace ;  \\
        \varepsilon_z = z - \tilde{z} \Leftrightarrow \tilde{z} = z - \varepsilon_z \Leftrightarrow z = \tilde{z} + \varepsilon_z  \enspace .  \\
    \end{array}
\end{equation}

Using the same reasoning as in the soundness proof of Section~\ref{section::annex::proof::add_sub}, the constraint will bound
$\nsb_\sigma(z_v)$. Eq.~(\ref{cstr::div::proof::epslion_z}) shows how to
bound $\varepsilon_z$. We denote by $e_\div$ the rounding error introduced by
the division.

\begin{align}\small
    \label{cstr::div::proof::epsilon_z}
    \varepsilon_z
        &= \frac{x}{y} - \round(\frac{\tilde{x}}{\tilde{y}})                \nonumber   \\
        &= \frac{x}{y} - \frac{\tilde{x}}{\tilde{y}} - e_{\div}             \nonumber   \\                                                      \\
        &=  \frac{\tilde{x} + \varepsilon_x}{\tilde{y} + \varepsilon_y}
            - \frac{\tilde{x}}{\tilde{y}}
            - e_{\div}                                                      \nonumber   \\
        &=  \frac{
                \tilde{y}(\tilde{x} + \varepsilon_x)
                - \tilde{x}(\tilde{y} + \varepsilon_y)
            }{
                \tilde{y}(\tilde{y} + \varepsilon_y)
            }
            - e_{\div}                                                                  \\
        &=  \frac{
                \tilde{y}\varepsilon_x - \tilde{x}\varepsilon_y
            }{
                \tilde{y}(\tilde{y} + \varepsilon_y)
            }
            - e_{\div} \label{cstr::div::proof::epslion_z} \enspace .
\end{align}

We obtain Eq. (\ref{cstr::div::proof::e_/}) using Eq.
(\ref{def::ulp::error}) since $e_\div$ is a rounding error:
\begin{equation}
    |e_\times| \leq 2^{\ulp(\tilde{z}) - 1} \enspace .
    \label{cstr::div::proof::e_/} \enspace .
\end{equation}

The error $|\varepsilon_z|$ can be bound using the triangle inequality as shown:
\begin{align}
    |\varepsilon_z|
        &=  |\frac{
                \tilde{y}\varepsilon_x - \tilde{x}\varepsilon_y
            }{
                \tilde{y}(\tilde{y} + \varepsilon_y)
            }
            - e_{\div}|                                             \nonumber\\
        &\leq
            |\frac{
                \tilde{y}\varepsilon_x - \tilde{x}\varepsilon_y
            }{
                \tilde{y}(\tilde{y} + \varepsilon_y)
            }|
            + |e_{\div}|                                            \nonumber\\
        &\leq
            |\frac{
                \tilde{y}\varepsilon_x - \tilde{x}\varepsilon_y
            }{
                \tilde{y}(\tilde{y} + \varepsilon_y)
            }|
            + 2^{\ulp(z) - 1}      \enspace .  \label{cstr::div::proof::bound_epsilon}
\end{align}

To bound $|\varepsilon_z|$, one therefore needs to find an upper bound of
$|\tilde{y}\varepsilon_x - \tilde{x}\varepsilon_y|$ and a lower bound of
$|\tilde{y}(\tilde{y} + \varepsilon_y)|$.

\begin{align}
    |\tilde{x}|       &< 2^{\ufp(\tilde{x}) + 1}              \label{cstr::div::proof::bound_x}  \enspace .  \\
    |\tilde{y}|       &< 2^{\ufp(\tilde{y}) + 1}              \label{cstr::div::proof::bound_y}  \enspace .  \\
    |\varepsilon_x|   &< 2^{\ufp(\tilde{x}) - \nsb_\sigma(x_v) + 1}    \label{cstr::div::proof::bound_ex} \enspace .  \\
    |\varepsilon_y|   &< 2^{\ufp(\tilde{y}) - \nsb_\sigma(y_v) + 1}    \label{cstr::div::proof::bound_ey} \enspace .
\end{align}

\sloppy
Together, Equations (\ref{cstr::div::proof::bound_x}),
(\ref{cstr::div::proof::bound_y}), (\ref{cstr::div::proof::bound_ex}), and
(\ref{cstr::div::proof::bound_ey}) allow us to bound \mbox{$|\tilde{y}\varepsilon_x -
\tilde{x}\varepsilon_y|$} as shown in Eq.
(\ref{cstr::div::proof::bound_1_1}). As Eq.
(\ref{cstr::div::proof::bound_1_1}) is made of two power of two, the upper bound
can be refined as shown in Eq. (\ref{cstr::div::proof::bound_1}).
\fussy

\begin{align}
    |\tilde{y}\varepsilon_x - \tilde{x}\varepsilon_y|
        &<
            2^{\ufp(\tilde{x}) + \ufp(\tilde{y}) - \nsb_\sigma(x_v) + 2} \nonumber\\
        &\quad
            + 2^{\ufp(\tilde{x}) + \ufp(\tilde{y}) - \nsb_\sigma(y_v) + 2}               \label{cstr::div::proof::bound_1_1}\\
        &<
            2^{\ufp(\tilde{x}) + \ufp(\tilde{y}) - \min(\nsb(x_v), \nsb_\sigma(y_v)) + 3}  \label{cstr::div::proof::bound_1}\enspace .
\end{align}

Now, let us focus on a lower bound for $|\tilde{y}(\tilde{y} + \varepsilon_y)|$.
The main issue is that $\tilde{y} + \varepsilon_y$ can be close to $0$ for large
errors. We limit the analysis to scenarios with small errors, \textit{i.e.} when
$\ufp(\tilde{y}) > \ufp(\varepsilon_y) + 1$ which can be simplified to $\nsb_\sigma (y_v) > 1$.
Otherwise, no constraint is applied.

To use Equations (\ref{cstr::div::proof::bound_y}) and
(\ref{cstr::div::proof::bound_ey}) allows to get Eq.
(\ref{cstr::div::proof::bound_denom})

\begin{align}
    |\tilde{y} + \varepsilon_y| &   \geq ||\tilde{y}| - |\varepsilon_y||    \nonumber\\
                                &   \geq |\tilde{y}| - |\varepsilon_y|      \nonumber\\
                                &   > 2^{\ufp(\tilde{y})} - 2^{\ufp(\varepsilon_y) + 1}   \nonumber\\
                                &   > 2^{\ufp(\tilde{y}) - 1}   \label{cstr::div::proof::bound_denom}
\end{align}

Eq. (\ref{cstr::div::proof::bound_denom}) allows to conclude
$|\tilde{y}(\tilde{y} + \varepsilon_y)| > 2^{2\ufp(\tilde{y}) - 1}$ and then to get Eq. (\ref{cstr::div::proof::bound_left}):

\begin{align}
    |\frac{
        \tilde{y}\varepsilon_x - \tilde{x}\varepsilon_y
    }{
        \tilde{y}(\tilde{y} + \varepsilon_y)
    }|
        & < \frac{2^{\ufp(\tilde{x}) + \ufp(\tilde{y}) - \min(\nsb_\sigma(x_v), \nsb_\sigma(y_v)) + 3}}{2^{2\ufp(\tilde{y}) - 1}} \nonumber\\
        & < 2^{\ufp(\tilde{x}) - \ufp(\tilde{y}) - \min(\nsb_\sigma(x_v), \nsb_\sigma(y_v)) + 4} \label{cstr::div::proof::bound_left} \enspace .
\end{align}

\sloppy
Equations (\ref{cstr::div::proof::bound_epsilon}) and
(\ref{cstr::div::proof::bound_left}) can be put together to obtain
Eq.~(\ref{cstr::div::proof::bound_epsilon_2_sum}):
\fussy

\begin{align}
    |\varepsilon_z| &< 2^{\ufp(\tilde{x}) - \ufp(\tilde{y}) - \min(\nsb_\sigma(x_v), \nsb_\sigma(y_v)) + 4} + 2^{\ulp(\tilde{z}) - 1} \enspace .
    \label{cstr::div::proof::bound_epsilon_2_sum}
\end{align}

Eq. (\ref{cstr::div::proof::bound_epsilon_2_sum}) is the sum of two power
of two and can thus be bounded as shown in Eq.~\ref{cstr::div::proof::bound_epsilon_2_ufp}:

\begin{equation}
    \resizebox{0.85\linewidth}{!}{$
    |\varepsilon_z| < 2^{
        \max\left(
        \begin{array}{l}
            \ufp(\tilde{x}) - \ufp(\tilde{y}) - \min(\nsb_\sigma(x_v), \nsb_\sigma(y_v)) + 4 \\
            \ulp(\tilde{z}) - 1
        \end{array}
        \right) + 1
        \label{cstr::div::proof::bound_epsilon_2_ufp}
    }
    $}\enspace .
\end{equation}

Which finally allow to conclude with Eq. (\ref{cstr::div::proof::end})

\begin{equation}
    \resizebox{0.85\linewidth}{!}{$
    \begin{array}{l}
        \mathbf{If~} \nsb_\sigma(y_v) > 1 \mathbf{~then~} \\
        \ufp(\varepsilon_z) \leq \max\left(
        \begin{array}{l}
            \ufp(\tilde{x}) - \ufp(\tilde{y}) - \min(\nsb_\sigma(x_v), \nsb_\sigma(y_v)) + 4 \\
            \ulp(\tilde{z}) - 1
        \end{array}
        \right)
    \end{array}
    $}
    \label{cstr::div::proof::end}
\end{equation}
\hfill $\blacksquare$


\subsection{Soundness of abstract Cos, Sin, and Atan (cf. Def.~\ref{cstr::trigo::theorem})}
\label{section::annex::proof::trigo}

Let $\varepsilon_x$, and $\varepsilon_z$ denote the \fp errors:
\begin{equation}\small
\begin{array}{l}
        \varepsilon_x = x - \tilde{x} \Leftrightarrow \tilde{x} = x - \varepsilon_x \Leftrightarrow x = \tilde{x} + \varepsilon_x  \enspace ;  \\
        \varepsilon_z = z - \tilde{z} \Leftrightarrow \tilde{z} = z - \varepsilon_z \Leftrightarrow z = \tilde{z} + \varepsilon_z  \enspace .  \\
\end{array}
\end{equation}

Using the same reasoning as in the soundness proof of Section~\ref{section::annex::proof::add_sub}, the constraint will bound $\nsb_\sigma(z_v)$.
Let $f$ denote either sin, cos, or atan. Let $\tilde{z} = \round(f(\tilde{x}))$, $z = f(x)$, and $e$ a rounding error.

{\small
\begin{align}
\tilde{z}
&= \round(f(\tilde{x})) = f(\tilde{x}) + e = f(x - \varepsilon_x) + e \nonumber\\
&= f(x) + (\underbrace{f(x - \varepsilon_x) - f(x)}_R) + e
 = \underbrace{f(x)}_z - \underbrace{(-R - e)}_{\varepsilon_z} \enspace .
\label{cstr::trigo::proof::prev_taylor}
\end{align}
}
We denote by  $R$ the remainder that can be bounded using Taylor-Lagrange Inequality as shown in
Eq.~(\ref{cstr::trigo::proof::taylor}). Note that, since $f$ is either $\mathsf{cos}$, $\mathsf{sin}$, or $\mathsf{atan}$, $\max|f'| = 1$.

\begin{equation}\small
|R| \leq \max|f'|\,|\varepsilon_x| \leq |\varepsilon_x|
< 2^{\ufp(x) - \nsb_\sigma(x) + 1} \enspace .
\label{cstr::trigo::proof::taylor}
\end{equation}

We denote by $e$  a simple rounding error thus
$|e| \leq 2^{\ulp(\tilde{z}) - 1}$ using Eq.~(\ref{def::ulp::error}). Equations~(\ref{cstr::trigo::proof::prev_taylor}) and~(\ref{cstr::trigo::proof::taylor}) can be used to bound $|\varepsilon_z|$: 

{\small
\begin{align}
|\varepsilon_z|
&= |R + e| \le |R| + |e| < 2^{\ufp(x) - \nsb_\sigma(x) + 1} + 2^{\ulp(z) - 1} \nonumber\\
&< 2^{\max(\ufp(x) - \nsb_\sigma(x) + 1,\; \ulp(z) - 1) + 1} \enspace .
\label{cstr::trigo::proof::bound_epsilon_2}
\end{align}
}
Equations~(\ref{def::ufp::bound}) and~(\ref{cstr::trigo::proof::bound_epsilon_2}) allow us to bound $\ufp(\varepsilon_z)$ as shown 

\begin{equation}\small
\ufp(\varepsilon_z) \le
\max\left(\begin{array}{l}
\ufp(x) - \nsb_\sigma(x) + 1 \\
\ulp(z) - 1
\end{array}\right) \enspace .
\end{equation}
\hfill $\blacksquare$


\subsection{Soundness of abstract Sqrt (cf. Def.~\ref{cstr::sqrt::theorem})}
\label{section::annex::proof::sqrt}
Let $\sigma$ denotes a machine state and $x_v,~z_v$ denote two variables.
$x$, and $z$ denotes their respective real values, $x =
\sigma_\mathbb{R}(x_v)$,  and $z =
\sigma_\mathbb{R}(z_v)$. $\tilde{x}$ and $\tilde{z}$ denote their
respective \fp values, $\tilde{x} = \sigma_{\mathbb{F}_p}(x_v)$, and $\tilde{z} =
\sigma_{\mathbb{F}_p}(z_v)$.

\begin{equation}
    \begin{array}{l}
        \varepsilon_x = x - \tilde{x} \Leftrightarrow \tilde{x} = x - \varepsilon_x \Leftrightarrow x = \tilde{x} + \varepsilon_x \enspace .   \\
        \varepsilon_z = z - \tilde{z} \Leftrightarrow \tilde{z} = z - \varepsilon_z \Leftrightarrow z = \tilde{z} + \varepsilon_z \enspace .  \\
    \end{array}
\end{equation}

Using the same reasoning as in the soundness proof of Section~\ref{section::annex::proof::add_sub}, the constraint will bound
$\nsb_\sigma(z_v)$. Eq.~(\ref{cstr::cstr::proof::epslion_z}) shows how to
bound $\varepsilon_z$. We denote by $e$ the rounding error introduced by
the operation.

\begin{align}
    \varepsilon_z   &= z - \tilde{z}    \nonumber\\
                    &= \sqrt{x} - \round(\sqrt{x + \varepsilon_x})  \nonumber\\
                    &= \sqrt{x} - \sqrt{x - \varepsilon_x} - e      \nonumber\\
    |\varepsilon_z| &\leq \underbrace{|\sqrt{x} - \sqrt{x + \varepsilon_x}|}_\Delta + |e| \label{cstr::cstr::proof::epslion_z} \enspace .
\end{align}

As stated by Moscato~et~al.~\cite{precisa}, $\Delta \leq \sqrt{|\varepsilon_x|}$. This result can be obtain computing the derivative of $\Delta$ with respect to $x$.

Using Eq.~(\ref{def::ufp::bound}) bounding the $\ufp$ together with Def.~\ref{def::nsb} defining the $\nsb$, we obtain

\begin{equation}
    |\varepsilon_x| < 2 ^{\ufp(\tilde{x}) - \nsb_\sigma(x_v) + 1}
\end{equation}

that allows to bound $\Delta$:

\begin{align}
    \Delta  &\leq \sqrt{|\varepsilon_x|} \nonumber\\
            &< 2^{\frac{\ufp(\tilde{x}) - \nsb_\sigma(x_v) + 1}{2}}\label{proof::sqrt::delta}\,.
\end{align}

Eq.~(\ref{def::ulp::error}) allows us to bound the rounding error $e$ using $\ulp(\tilde{z})$.

\begin{equation}
    |e| \leq 2^{\ulp(\tilde{z}) - 1}\label{proof::sqrt::e}\,.
\end{equation}

Combining Eq.~(\ref{proof::sqrt::delta})~and~(\ref{proof::sqrt::e}) leads to

\begin{equation}
    |\varepsilon_z| < 2^{\frac{\ufp(\tilde{x}) - \nsb_\sigma(x_v) + 1}{2}} + 2^{\ulp(\tilde{z}) - 1}
\end{equation}

which is the sum of two power of two, hence:
\begin{align}
    |\varepsilon_z| &< 2^{
        \max (\frac{\ufp(\tilde{x}) - \nsb_\sigma(x_v) + 1}{2}, \ulp(\tilde{z}) - 1)) + 1
    }
        \label{proof::sqrt::eps}
\end{align}

Eq.~(\ref{proof::sqrt::eps}) bounds $|\varepsilon_z|$ using a power of two. Therefore, we use Eq.~(\ref{def::ufp::bound}) to deduce a bound on $\ufp(\varepsilon_z)$:

\begin{equation}
    \ufp(\varepsilon_z) \leq \max (\frac{\ufp(\tilde{x}) - \nsb_\sigma(x_v) + 1}{2}, \ulp(\tilde{z}) - 1)\label{proof::sqrt::ufp_eps}
\end{equation}

Eq.~(\ref{proof::sqrt::ufp_eps}) combined with Def.~(\ref{def::nsb}) allows us to conclude.

\hfill$\blacksquare$

\end{document}